\newtheorem{remark}{Remark}
\newtheorem{lemma}{Lemma}
\newtheorem{proposition}{Proposition}
\newenvironment{proof}[1][Proof]{\noindent\textbf{#1.} }{\ \rule{0.5em}{0.5em}}
\newcommand{\bY}{\mathbf{Y}}
\newcommand{\bW}{\mathbf{W}}
\newcommand{\bX}{\mathbf{X}}
\newcommand{\bZ}{\mathbf{Z}}
\newcommand{\bh}{\mathbf{h}}
\newcommand{\bH}{\mathbf{H}}
\newcommand{\bA}{\mathbf{A}}
\newcommand{\ba}{\mathbf{a}}
\newcommand{\bb}{\mathbf{b}}
\newcommand{\bV}{\mathbf{V}}
\newcommand{\bE}{\mathbf{E}}
\newcommand{\bI}{\mathbf{I}}
\newcommand{\bL}{\mathbf{L}}
\newcommand{\bg}{\mathbf{g}}
\newcommand{\bs}{\mathbf{s}}
\newcommand{\ZZ}{\mathbb{Z}}
\newcommand{\RR}{\mathbb{R}}
\newcommand{\Unif}{\mathop{\mathrm{Unif}}}
\newcommand{\OL}{\mathcal{E}_{\overline{\mathrm{OL}}}}
\newcommand{\OLn}{\mathcal{E}_{\overline{\mathrm{OL}}_n}}
\DeclareMathOperator*{\argmin}{\arg\!\min}
\DeclareMathOperator*{\argmax}{\arg\!\max}
\begin{document}

\title{A Modulo-Based Architecture for Analog-to-Digital Conversion}
\author{Or Ordentlich, Gizem Tabak, Pavan Kumar Hanumolu, Andrew C. Singer and Gregory W. Wornell
\thanks{}
\thanks{O. Ordentlich is with the Hebrew University of Jerusalem, Israel (email: or.ordentlich@mail.huji.ac.il). G. Tabak, P. K. Hanumolu and A. C. Singer are with the University of Illinois, Urbana-Champaign, USA (emails: \{tabak2,hanumolu,acsinger\}@illinois.edu). G. W. Wornell is with the Massachusetts Institute of Technology, MA, USA (email: gww@mit.edu)}
\thanks{}}

\maketitle

\begin{abstract}
Systems that capture and process analog signals must first acquire them through an analog-to-digital converter. While subsequent digital processing can remove statistical correlations present in the acquired data, the dynamic range of the converter is typically scaled to match that of the input analog signal. 
The present paper develops an approach for analog-to-digital conversion that aims at minimizing the number of bits per sample at the output of the converter. This is attained by reducing the dynamic range of the analog signal by performing a modulo operation on its amplitude, and then quantizing the result. While the converter itself is universal and agnostic of the statistics of the signal, the decoder operation on the output of the quantizer can exploit the statistical structure in order to unwrap the modulo folding. The performance of this method is shown to approach information theoretical limits, as captured by the rate-distortion function, in various settings. An architecture for modulo analog-to-digital conversion via ring oscillators is suggested, and its merits are numerically demonstrated.
\end{abstract}

\section{Introduction}
\label{sec:intro}

Analog-to-digital converters (ADCs) are an essential component in any device that manipulates analog signals in a digital manner. While digital systems have benefited tremendously from scaling, their analog counterparts have become increasingly challenging. Consequently, it is often the case that the ADC constitutes the main bottleneck in a system, both in terms of power consumption and real estate, and in terms of the quality of the system's output. Developing more efficient ADCs is therefore of great interest~\cite{Walden99,lrrb05}.

The quality of an ADC is measured via the tradeoff between various parameters such as power consumption, size, cost of manufacturing, and the distortion between the input signal and its digitally-based representation. For the sake of a unified, technology-independent, discussion, it is convenient to restrict the characterization of an ADC quality to three basic parameters: 1) The number of analog samples per second $F_S$; 2) The number of ``raw'' output bits $R$ the ADC produces per sample (before subsequent possible compression); 3) The mean squared error (MSE) distortion $D$ between the input signal and a reconstruction that is based on the output of the ADC.

While different applications may require different tradeoffs between $F_S$, $R$ and $D$, it is always desirable to design the ADC such that all three parameters are as small as possible. The focus of this work is on the quantization rate $R$. For a given sampling frequency $F_S$, and a given target distortion $D$, our goal is to design ADCs that use the smallest possible number of raw output bits per sample.

The problem of analog-to-digital conversion can be seen as an instance of the lossy source coding/lossy compression problem~\cite{berger71,jayantnoll,coverthomas}, as the output of an ADC is a binary sequence, which represents the analog source. A unique key feature of the analog-to-digital conversion problem is that the encoding of the source is carried out in the analog domain, while the decoding procedure is purely digital. Given the limitations of analog processing, it is therefore generally only practical to exploit the source structure at the decoder. Hence, the type of source coding schemes that are suitable for data conversion, are those that approach fundamental limits without requiring knowledge of the source structure at the encoder. In addition, latency and complexity constraints in data conversion, typically preclude the use of schemes other than those based on scalar quantization.

The input signal to an ADC is often known to have structure that could be exploited to reduce the overall bit rate of its representation, $R$.
In our analysis, it will be convenient to express this structure using a stochastic model for the input. Consequently, throughout the paper, we will model the input to the ADC as a stationary stochastic Gaussian process $X(t)$, whose power spectral density (PSD) encapsulates the assumed structure. More generally, we will sometimes also consider the problem of analog-to-digital conversion of a vector $\bX(t)=\{X_1(t),\ldots,X_K(t)\}$ of jointly stationary stochastic Gaussian processes, via $K$ parallel ADCs, the input to each one of them is one of the $K$ processes. 

Under such stochastic modeling, rate-distortion theory~\cite{berger71} provides the fundamental lower bound $F_s\cdot R>R_X(D)$ for any ADC (and corresponding decoder) that achieves distortion $D$, where $R_X(D)$ is the rate-distortion function of the process $X(t)$ in bits per second. In general, achieving the rate-distortion function of a source requires using sophisticated high-dimensional quantizers, whereas analog-to-digital conversion is invariably done via scalar uniform quantizers. Thus, achieving this lower bound with ADCs seems overly optimistic. Nevertheless, as we shall see, approaching the rate-distortion bound, up to some inevitable loss due to the one-dimensional nature of the quantization, is sometimes possible by a simple modification of the scalar uniform quantizer, namely, a \emph{modulo ADC}, followed by a digital decoder that efficiently exploits the source structure.

\begin{figure}[t]
\begin{center}
\psset{unit=0.6mm}
\begin{pspicture}(0,10)(130,57)

\rput(-5,30){$x$} \psline{->}(0,30)(15,30)
\psframe(15,15)(75,43.5)
\rput(45,48){$\bmod \Delta$}
\psline(20,21.5)(37,38.5)(37,21.5)(54,38.5)(54,21.5)(71,38.5)
\rput(70.5,30){$\ldots$}
\rput(19.5,30){$\ldots$}

\rput(75,0){
\psline{->}(0,30)(10,30)
\psframe(10,15)(35,43.5)
\rput(21,48){$Q(\cdot)$}
\rput(10,30){
\psline(0,-6.8)(7.4,-6.8)(7.4,-3.4)(10.8,-3.4)(10.8,0)(14.2,0)(14.2,3.4)(17.6,3.4)(17.6,6.8)(25,6.8)
}
}

\psline{->}(110,30)(125,30)

\rput(70,62){Modulo ADC}
\psframe[linestyle=dashed](7,10)(118,56)

\end{pspicture}
\end{center}
\caption{A schematic illustration of the modulo ADC.}
\label{fig:modADC}
\end{figure}
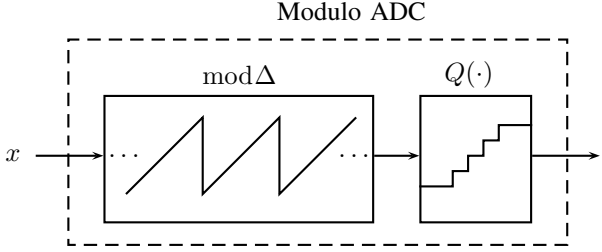

Instead of sampling and quantizing the process $X(t)$, a modulo ADC samples and quantizes the process $[X(t)]\bmod \Delta$, where the modulo size $\Delta$ is a design parameter. See Figure~\ref{fig:modADC}. Equivalently, a modulo ADC can be thought of as a standard uniform scalar ADC with step-size $\delta$ and an arbitrarily large dynamic range/support, but that outputs only the $R$ least significant bits in the description of each sample, where $2^R=\tfrac{\Delta}{\delta}$. The benefit of applying the modulo operation on $X(t)$ is in reducing its dynamic range/support, which in turn enables a reduction of the number of bits per sample produced by the ADC, without increasing the quantizer's step-size. 
This operation, which corresponds to disregarding coarse information about $X(t)$, will
otherwise substantially degrade the source reconstruction. However, by properly accounting for the modulo operation and
appropriately choosing its parameter $\Delta$, we can unwrap the modulo operation with high probability using previous samples
of $X(t)$ and exploiting the (redundant) structure in the signal.

Following standard system design methodology, in the performance analysis of a modulo ADC, we distinguish between two events: 1) The no-overload event $\bar{\mathcal{E}}_{\mathrm{OL}}$ where the decoder was able to correctly unwrap the modulo operation. We require the MSE distortion, conditioned on this event, to be at most $D$; 2) The overload event $\mathcal{E}_{\mathrm{OL}}$ where the decoder fails in unwrapping the modulo operation. We require the probability of this event $\Pr(\mathcal{E}_{\mathrm{OL}})$ to be small, but do not concern ourselves with the MSE distortion conditioned on the occurrence of this event.

\subsection{Our Contributions}

This work further develops the modulo ADC framework in three complementary directions, as specified below.

\subsubsection{Oversampled Modulo ADC}
We show that a modulo ADC can be used as an alternative to $\Sigma\Delta$ converters. A $\Sigma\Delta$ converter is based on oversampling the input process $X(t)$, i.e., sampling above the Nyquist rate, in conjunction with noise-shaping, which pushes much of the energy of the quantization noise to high frequencies, where there is no signal content. See Figure~\ref{fig:SigDel}. The noise shaping operation requires incorporating an elaborate mixed signal feedback circuit. In particular, the circuit first generates the quantization noise, which necessitates using not only an ADC, but also an accurately-matched digital-to-analog converter (DAC), and then applies an analog filter. The analog nature of the signal processing makes it challenging to use filters of high-orders, which in turn limits performance. 

We develop an alternative architecture (Section~\ref{sec:oversampled}) that shifts much of the complexity to the decoder, whereas the ``encoder'' is simply a modulo ADC. See Figure~\ref{fig:temporal}. The parameter $\Delta$ in the modulo ADC, as well as the coefficients of the prediction filter in Figure~\ref{fig:temporal}, depend only on the bandwidth $B$ of the input process $X(t)$ and on its variance $\sigma^2$, and not on the other details of its PSD. Similarly, the MSE distortion between the input process and its reconstruction, depends only on $B$ and $\sigma^2$. Thus, the developed architecture is as agnostic as $\Sigma\Delta$ converters to the statistics of the input process. Furthermore, for a flat-spectrum process, the distortion is within a small gap, due to one-dimensionality of the encoder, from the information theoretic limit.

\begin{figure}[t]
\begin{center}
\psset{unit=0.6mm}
\begin{pspicture}(0,0)(250,40)

\rput(0,30){$X_n$}\psline{->}(5,30)(13,30)
\rput(-17,0){
\pscircle(35,30){5}\rput(35,30){$\Sigma$}\rput(28,27){$-$}
\psline{->}(40,30)(82,30)
\psframe(82,25)(97,35)\rput(90,30){ADC}
\psline{->}(97,30)(122,30)
\psframe(122,25)(137,35)\rput(129,30){LPF}
\psline{->}(137,30)(150,30)\rput(155,30){$\hat{X}_n$}

\psline{->}(105,30)(105,10)(97,10)
\psframe(82,15)(97,5)\rput(90,10){DAC}
\psline{->}(82,10)(75,10)

\pscircle(70,10){5}\rput(70,10){$\Sigma$}\rput(77,7){$-$}
\psline{->}(70,30)(70,15)

\psline{->}(65,10)(55,10)
\psframe(40,5)(55,15)\rput(48,10){Filter}
\psline{->}(40,10)(35,10)(35,25)
\psframe[linestyle=dashed](21,0)(108,40)\rput(28,43){Encoder}
\psframe[linestyle=dashed](115,20)(147,40)\rput(125,43){Decoder}
}

\end{pspicture}
\end{center}
\caption{Schematic architecture for oversampled $\Sigma\Delta$ converter. $\{X_n\}$ is obtained by sampling the process $X(t)$.}
\label{fig:SigDel}
\end{figure}
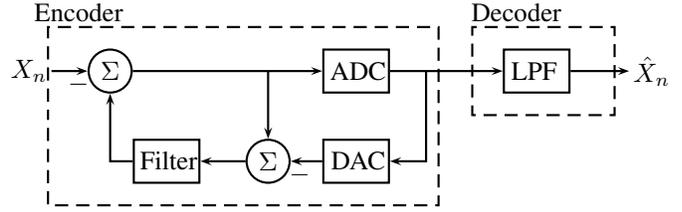 
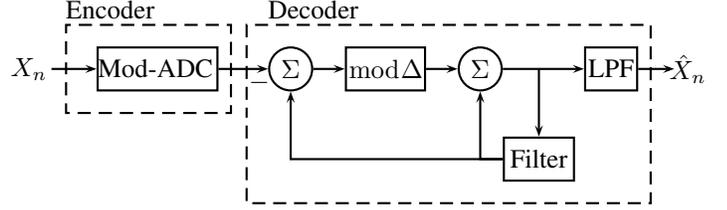
\begin{figure}[t]
\begin{center}
\psset{unit=0.6mm}
\begin{pspicture}(0,0)(250,40)

\rput(0,30){$X_n$}\psline{->}(5,30)(15,30)
\psframe(15,25)(42,35)\rput(28,30){Mod-ADC}
\psline{->}(42,30)(53,30)

\psframe[linestyle=dashed](8,20)(45,40)\rput(18,43){Encoder}

\rput(-27,0){
\pscircle(85,30){5}\rput(85,30){$\Sigma$}\rput(78,27){$-$}
\psline{->}(90,30)(97,30)\psframe(97,25)(115,35)\rput(106,30){$\bmod\Delta$}
\psline{->}(115,30)(122,30)\pscircle(127,30){5}\rput(127,30){$\Sigma$}
\psline{->}(132,30)(150,30)
\psframe(150,25)(162,35)\rput(156,30){LPF}
\psline{->}(162,30)(170,30)
\rput(173,30){$\hat{X}_n$}
\psline{->}(140,30)(140,15)\psframe(132,5)(148,15)\rput(140,10){Filter}
\psline{->}(132,10)(85,10)(85,25)
\psline{->}(132,10)(127,10)(127,25)
}

\psframe[linestyle=dashed](48,0)(138,40)\rput(63,43){Decoder}

\end{pspicture}
\end{center}
\caption{Schematic architecture for oversampled modulo ADC. The same architecture, without the low-pass filter (LPF) is also suitable for modulo ADC for a general stationary process. $\{X_n\}$ is obtained by sampling the process $X(t)$.}
\label{fig:temporal}
\end{figure}

\subsubsection{A Phase-Domain Implementation of Modulo ADC via Ring Oscillators}

We develop a modulo ADC implementation that performs the modulo reduction inherently as part of the analog signal acquisition process. As the phase of a periodic waveform is always measured modulo $2\pi$, a natural class of candidates are ADCs that first convert the input voltage into phase, and then quantize that phase. A notable representative within this class, which has been extensively studied in the literature\cite{holt97,sp08}, is the \emph{ring oscillator ADC}.

\begin{figure}[t]
	\includegraphics[scale=0.2]{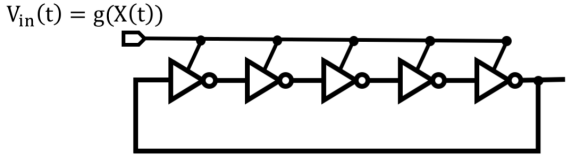}
	\caption{A schematic illustration of a ring oscillator with $N=5$ inverters. The states of all $N$ inverter are measured every $T_S$ seconds.}
	\label{fig:ringosc}
\end{figure}

\begin{figure}[t]
	\includegraphics[width=0.5\textwidth]{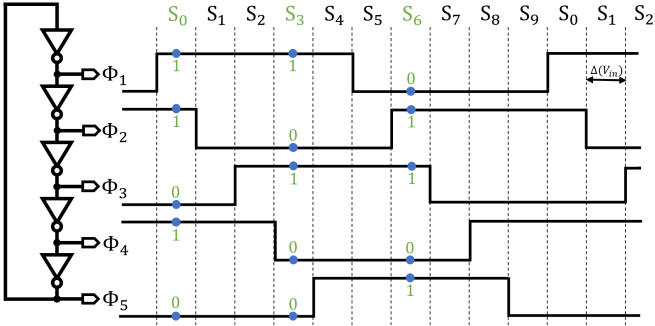}
	\caption{An example of the evolution of the states of the inverters in a ring oscillator.}
	\label{fig:inverterstates}
\end{figure}

Consider a closed-loop cascade of $N$ inverters, where $N$ is an odd number, all controlled with the same voltage $V_{dd}=V_{\text{in}}$, see Figure~\ref{fig:ringosc}. This circuit, which will be described in detail in Section~\ref{sec:ringosc}, oscillates between $2N$ states, corresponding to the values (`low' or `high', represented by `$0$' or `$1$') of each of the $N$ inverters. See Figure~\ref{fig:inverterstates}. The oscillation frequency is controlled by $V_{dd}$. Due to the oscillating nature of the circuit, if we sample its state every $T_S$ seconds, we cannot tell how many ``state changes" occurred between two consecutive samples, but we are able to determine this number modulo $2N$. Thus, by setting  $V_{dd}$ to $V_{dd}(t)=g(X(t))$, where $X(t)$ is the analog signal to be converted to a digital one and $g(\cdot)$ is a function to be specified, we obtain a modulo ADC. The input-output relation of this modulo ADC is characterized in Section~\ref{sec:ringosc}, and depends on the response time of the inverters to change in their input, as a function of $V_{dd}$.

In practice, the modulo operation realized in this way deviates from the ideal characteristic of Figure~\ref{fig:modADC} in a variety of ways. Accordingly, we perform several numerical experiments to evaluate and optimize the performance of an oversampled ring oscillator modulo ADC, and compare it to the performance of an ideal modulo ADC as well as to a $\Sigma\Delta$ converter. The results demonstrate that despite the non-idealities in the ring oscillator implementation, in some regimes, this architecture holds substantial potential for improvement over existing ADCs.

\subsubsection{Modulo ADCs for Jointly Stationary Processes}

In many applications the number of sensors/antennas observing a particular process is greater than the number of degrees-of-freedom (per time unit) governing its behavior. Thus, there is a redundancy at the receiver that can be exploited.  However, as this redundancy can be spread across time and space, traditional ADC architectures, as well as the modulo ADC architectures described in Section~\ref{subsec:timecor} and~\ref{subsec:spacecor}, are insufficient. In this part of the paper, we show how to address this problem via a natural extension of the modulo ADC framework.

As an example we will consider the problem of wireless communication.
It is by now well established that using receivers, as well as transmitters, with multiple antennas, dramatically increases the achievable communication rates over wireless channels~\cite{telatar99,tseviswanath}. However, adding antennas comes with the price of requiring multiple expensive and power hungry RF chains. For traditional ADC architectures, power and cost scale linearly with the number of receive antennas, which motivates an alternative solution.

It is often the case, that the signals observed by the different receive antennas are highly correlated, in time and in space. As an illustrative example, consider the case where the transmitter has one antenna, whereas the receiver has $K>1$ antennas. We can model the signal observed at each of the antennas, after sampling, as
\begin{align}
Y^k_n=h^k_n*X_n+Z^k_n, \ k=1,\ldots,K, \ n=1,\ldots,N, \label{eq:timeMIMO}
\end{align}
where $\{X_n\}$ is the process emitted by the transmitter, $\{h^k_n\}$ is the $k$th channel impulse response, and $\{Z^k_n\}$ are independent additive white Gaussian noise (AWGN) processes. 

Since all $K$ output processes $\{Y^1_n\},\ldots,\{Y^K_n\}$ in~\eqref{eq:timeMIMO} are noisy and filtered versions of of the same input process, they will typically be highly correlated. However, this correlation may be spread in time (the $n$-axis) and in space (the $k$-axis). As an extreme example, assume $\{X_n\}$ is an iid process, and the filters simply incur different delays, i.e., $h^k_{n}=\delta_{n-k}$ for $k=1,\ldots,K$. While each individual process $\{Y_k^n\}$ is white, and each vector $(Y_n^1,\ldots,Y_n^K)$, $n=1,\ldots,N$ has a scaled identity covariance matrix, the vector process $\{\{Y^1_n\},\ldots,\{Y^K_n\}\}$ is highly correlated. One must therefore jointly process the time and the spatial dimensions in order to exploit this correlation.

This phenomenon, where the signals observed by the different ADCs are highly correlated, is not unique to the wireless communication setup, and appears in many other applications, e.g., multi-array radar. It is, however, taken to the extreme in \emph{massive MIMO}~\cite{letm14}, where the number of antennas at the base station is of the order of tens or even hundreds, while the number of users it supports may be substantially fewer. 

In Section~\ref{subsec:spacetimecor} we develop an architecture that uses modulo ADCs, one for each receive antenna, in order to exploit the space-time correlation of the processes. We develop a low-complexity decoding algorithm for unwrapping the modulo operations. This algorithm combines the idea of performing prediction in time, of the quantized vector process from its past, with that of integer-forcing source decoding~\cite{oe13b}, which is used for exploiting spatial correlations in the prediction error vector. See Figure~\ref{fig:spatiotemporal}. In the limit of small $D$, the loss of the developed analog-to-digital conversion scheme with respect to the information theoretic lower bound on $D$, is shown to reduce to that of the integer-forcing source decoder.

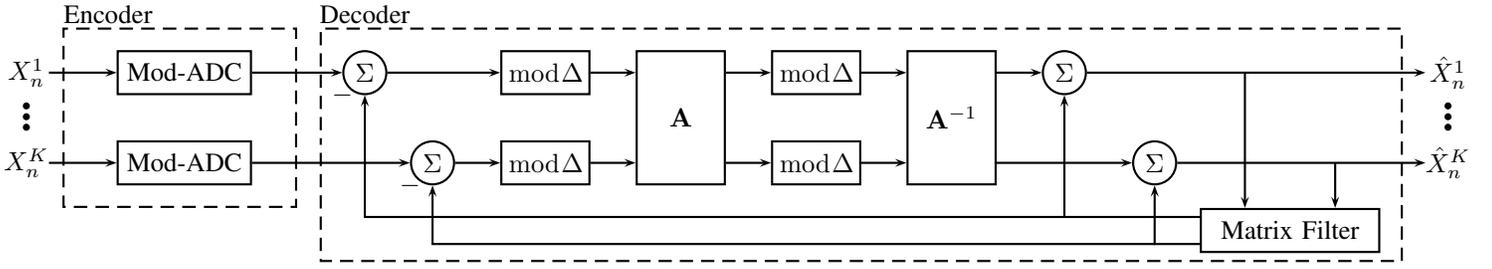
\begin{figure*}[t]
\begin{center}
\psset{unit=0.6mm}
\begin{pspicture}(0,-13)(310,45)

\rput(-5,30){$X^1_{n}$}\psline{->}(0,30)(15,30)
\psframe(15,25)(45,35)\rput(30,30){Mod-ADC}
\psline{->}(45,30)(65,30)
\pscircle(70,30){5}\rput(70,30){$\Sigma$}\rput(65,25){$-$}
\psline{->}(75,30)(100,30)
\psframe(100,25)(120,35)\rput(110,30){$\bmod\Delta$}\psline{->}(120,30)(130,30)

	\rput(-5,22){\Huge$\vdots$}

\rput(0,-20){
	\rput(-5,30){$X^K_{n}$}\psline{->}(0,30)(15,30)
		\psframe(15,25)(45,35)\rput(30,30){Mod-ADC}
		\psline{->}(45,30)(80,30)
		\pscircle(85,30){5}\rput(85,30){$\Sigma$}\rput(80,25){$-$}
		\psline{->}(90,30)(100,30)
		\psframe(100,25)(120,35)\rput(110,30){$\bmod\Delta$}\psline{->}(120,30)(130,30)
	}

\rput(50,0){
\psframe(80,5)(100,35)\rput(90,20){$\bA$}
\psline{->}(100,30)(110,30)\psframe(110,25)(130,35)\rput(120,30){$\bmod\Delta$}\psline{->}(130,30)(140,30)
\psline{->}(100,10)(110,10)\psframe(110,5)(130,15)\rput(120,10){$\bmod\Delta$}\psline{->}(130,10)(140,10)
\psframe(140,5)(160,35)\rput(150,20){$\bA^{-1}$}
\psline{->}(160,30)(170,30)\pscircle(175,30){5}\rput(175,30){$\Sigma$}\psline{->}(180,30)(255,30)\rput(260,30){$\hat{X}_n^1$}
	\rput(260,22){\Huge$\vdots$}
\psline{->}(160,10)(190,10)\pscircle(195,10){5}\rput(195,10){$\Sigma$}\psline{->}(200,10)(255,10)\rput(260,10){$\hat{X}_n^K$}
\psline{->}(215,30)(215,0)
\psline{->}(235,10)(235,0)
\psframe(205,-10)(245,0)\rput(225,-5){Matrix Filter}
\psline{->}(205,-2)(175,-2)(175,25)
\psline{->}(205,-8)(195,-8)(195,5)
\psline{->}(175,-2)(20,-2)(20,25)
\psline{->}(195,-8)(35,-8)(35,5)
}

\psframe[linestyle=dashed](3,0)(55,40)\rput(13,43){Encoder}

\psframe[linestyle=dashed](60,-12)(300,40)\rput(70,43){Decoder}

\end{pspicture}
\end{center}
\caption{Schematic architecture for Modulo ADCs for jointly stationary processes.}
\label{fig:spatiotemporal}
\end{figure*}

\subsection{Related Work}

The idea of using modulo ADCs/quantizers for exploiting temporal correlations within the input process $X(t)$ towards reducing the quantization rate $R$, dates back, at least, to~\cite{er79}, where a quantization scheme, called modulo-PCM, was introduced. A decoding scheme for unwrapping the modulo operation, based on maximum-likelihood sequence detection~\cite{forney72}, was further proposed in~\cite{er79}, and a heuristic analysis was performed, based on prediction of $X(t)$ from its past, which shows that modulo-PCM can approach the Shannon lower bound under the high-resolution assumptions. In Section~\ref{subsec:timecor}, we develop a more complete analysis of modulo quantization, the details of which are required for the application we discuss in Section~\ref{sec:oversampled}.

The architecture from Figure~\ref{fig:temporal} is based on using a prediction filter at the decoder, as a part of the modulo unwrapping process,  as was hinted at in~\cite{er79} (see also~\cite{ramamoorthy85}). In agreement with the literature on differential pulse-code modulation (DPCM) at the late 1970s (see e.g.~\cite{noll78}), the authors in~\cite{er79} proposed to design the prediction filter as the optimal one-step predictor of the unquantized process $\{X_n\}$ from its past. As shown in~\cite{zke08}, this design criterion is sub-optimal, and the ``correct'' design criterion is to take this filter as the one-step predictor of the \emph{quantized} process from its past. The difference between the two design criteria is significant for oversampled processes, which are the focus of Section~\ref{sec:oversampled}, whose PSD is zero at high frequencies, as in those frequencies the signal-to-distortion ratio is zero, no matter how small the quantization noise is. Our analysis in Section~\ref{sec:oversampled} reveals that designing the modulo size $\Delta$ and the prediction filter with respect to a quantized flat-spectrum input process, results in a \emph{universal} system. This means, that this system attains the same distortion $D$ for all input processes
that share the same support for the PSD and the same variance.

The use of modulo ADCs/quantizers was also studied by Boufounos in the context of quantization of oversampled signals~\cite{boufounos12} (see also~\cite{vb16}). In particular, it is shown in~\cite{boufounos12} that by randomly embedding a measurement vector in $\RR^K$ onto an $M\gg K$ dimensional subspace, and using a modulo ADC for quantizing each of the coordinates of the result, one can attain a distortion that decreases exponentially with the oversampling ratio, with high probability. In Section~\ref{sec:oversampled} we consider a similar setup, where an oversampled analog signal, with oversampling ratio $L>1$, i.e. $F_s$ is $L$ times greater than the Nyquist frequency, is digitized by a modulo ADC. In the language of~\cite{boufounos12}, this corresponds to embedding $\bX\in\RR^K$ to an $M=LK$ dimensional space by zero-padding followed by interpolation, which is indeed a linear operation. We show that for this particular ``embedding'' not only is the decay of MSE distortion exponential in the oversampling ratio, but the attained distortion is information-theoretically optimal, up to a constant loss, which is explicitly characterized, due to the scalar nature of the quantizer. Moreover, under this ``embedding'', a simple low-complexity decoding algorithm exists, whereas for the random projection case studied in~\cite{boufounos12}, no computationally efficient decoding algorithm was given. One advantage, on the other hand, of the approach from~\cite{boufounos12}, is that it is applicable to $1$-bit modulo ADCs, whereas the performance of the scheme from Section~\ref{sec:oversampled} typically becomes attractive starting from $R \gtrsim 2$ bits per sample.

Very recently, Bhandari \emph{et al.} have addressed the question of what is the minimal sampling rate that allows for exact recovery of a bandlimited finite-energy signal, from its modulo-reduced sampled version~\cite{bkr17} (see also~\cite{bkr18}). They have found that a sufficient condition for correct reconstruction is sampling above the Nyquist rate by a factor of  $2\pi e$, regardless of the size of the modulo interval.
The analysis in~\cite{bkr17} did not take quantization noise into account, which corresponds to $R=\infty$ and $D=0$ in our setup. 

The merits of a modulo ADC for distributed analog-to-digital conversion of signals correlated in space, but not in time, were demonstrated in~\cite{oe13b}. A low-complexity decoding algorithm, for unwrapping the modulo operation, was proposed and its performance was analyzed. It was demonstrated via numerical experiments that the performance is usually quite close to the information theoretic lower bounds (See also~\cite{de17}). In Section~\ref{subsec:spacecor}, we summarize the decoding scheme from~\cite{oe13b} and the corresponding performance analysis, as those will be needed in Section~\ref{subsec:spacetimecor}, where we develop a modulo ADC architecture for analog-to-digital conversion of jointly stationary processes. The decoding algorithm for this setup, as well as its performance analysis, is inspired by the ideas
and techniques from Sections~\ref{subsec:timecor} and~\ref{subsec:spacecor}.

In a broader sense, modulo quantization is closely related to Wyner-Ziv's source coding with side information setup and to its channel coding dual, which is the Gel'fand-Pinsker setup~\cite{zse02}. In the latter context, we further note that modulo quantization is widely used for communication over intersymbol interference channels~\cite{tomlinson71,harashima72}. Recently, Hong and Caire~\cite{hc13IT} considered modulo ADCs as potential candidates for the front end of receivers in a cloud radio access network (CRAN), employing compute-and-forward~\cite{ng11IT} based protocols.

Note that the although the concept of modulo ADC is reminiscent of \emph{folding ADCs}~\cite{foldingADC}, an important difference is that unlike the latter, the former does not keep track of the number of folds that occurred and, moreover, its functionality does not depend on this number, i.e., it does not saturate for large inputs. In unwrapping the modulo operation at the decoder, the missing information about number of folds is recovered, and we are able to attain the same $D$ with smaller rate.

Finally, another related line of work, is that of compressed sampling, see, e.g.,~\cite{venkataramani2000perfect,vetterli2002sampling,me10}, where the goal is to design universal and efficient ADCs  with a small sampling frequency $F_S$, under the assumption that the input signal occupies only a small portion of its total bandwidth, but the exact support is unknown.

\subsection{Organization}

The rest of the paper is organized as follows. In Section~\ref{sec:ideal} we formally define the modulo ADC and study its performance for stationary scalar input processes, and for random vectors (spatial correlation). Section~\ref{sec:oversampled} develops the use of oversampled modulo ADCs as a substitute for $\Sigma\Delta$ converters, and analyzes the tradeoffs this architecture achieves. In Section~\ref{sec:ringosc} we introduce an implementation of modulo ADCs via ring oscillators and establish the corresponding input-output mathematical model. Numerical experiments for evaluating the performance of ring oscillators based oversampled modulo ADCs are performed in Section~\ref{sec:numerical}. Section~\ref{subsec:spacetimecor} proposes to use parallel modulo ADCs for digitizing jointly stationary processes. The paper concludes in Section~\ref{sec:conc}.

\section{Preliminaries on Ideal Modulo ADC}
\label{sec:ideal}

Let $\Delta\in \RR^+$ be a positive number, and define the $\bmod \Delta$ operation as
\begin{align}
[x]\bmod\Delta\triangleq x-\Delta\left\lfloor \frac{x}{\Delta}\right\rfloor\in[0,\Delta),\nonumber
\end{align}
where the floor operation $\lfloor x\rfloor$ returns the largest integer smaller than or equal to $x$. By definition, we have that for any $x,y\in\RR$ and $\Delta>0$
\begin{align}
\left[[x]\bmod\Delta +y\right]\bmod\Delta=[x+y]\bmod\Delta.
\label{eq:moddist}
\end{align}
An $R$-bit modulo ADC with resolution parameter $\alpha$, or \emph{$(R,\alpha)$ mod-ADC}, is defined by
\begin{align}
[x]_{R,\alpha}\triangleq\left[\lfloor \alpha x\rfloor\right]\bmod 2^R\in\{0,1,\ldots,2^R-1\},\nonumber
\end{align}
where we have assumed that $2^R$ is an integer. In case $R$ itself is an integer, each sample of $[x]_{R,\alpha}$ can be represented by $R$ bits. Otherwise, we can buffer $n$ consecutive samples $[x_1]_{R,\alpha},\ldots,[x_n]_{R,\alpha}$ and represent them by $\lceil nR\rceil\leq nR+1$ bits, such that the average number of bits per sample is $\leq R+\tfrac{1}{n}$. The role of $\alpha$ here is to scale the input prior to quantization.
We can write $[x]_{R,\alpha}$ as
\begin{align}
[x]_{R,\alpha}=\left[\alpha x+\left(\lfloor \alpha x\rfloor-\alpha x\right)\right]\bmod 2^R=\left[\alpha x+z\right]\bmod 2^R.\label{eq:modExact}
\end{align}
The error term $z=\lfloor \alpha x\rfloor-\alpha x\in(-1,0]$ in~\eqref{eq:modExact} is clearly a deterministic function of $x$. Nevertheless, throughout this paper we will model this error term as additive uniform noise $Z\sim\Unif((-1,0])$ statistically independent of $x$, such that the $(R,\alpha)$ mod-ADC will be treated as a \emph{stochastic channel} with input $x$ and output $Y$, related as
\begin{align}
Y=[\alpha x+Z]\bmod 2^R.\label{eq:linearApprox}
\end{align}
The approximation of the $(R,\alpha)$ mod-ADC by the additive modulo channel~\eqref{eq:linearApprox} can be made exact via the use of \emph{subtractive dithers}. Specifically, we can use a random variable $U\sim\Unif([0,1))$, statistically independent of $x$, which we refer to as a \emph{dither}, and feed $\tilde{x}=x+U/\alpha$ to the $(R,\alpha)$ mod-ADC instead of feeding $x$. The output of the modulo ADC in this case will be
\begin{align}
[\tilde{x}]_{R,\alpha}&=\left[\alpha \tilde{x}+\left(\lfloor \alpha \tilde{x}\rfloor-\alpha \tilde{x}\right)\right]\bmod 2^R\nonumber\\
&=\left[\alpha x+U+\left(\lfloor \alpha x+U\rfloor-(\alpha x+U)\right)\right]\bmod 2^R\nonumber.
\end{align}
Subtracting $U$ from $[\tilde{x}]_{R,\alpha}$ and reducing the result modulo $2^R$, we obtain
\begin{align}
&\left[[\tilde{x}]_{R,\alpha}-U\right]\bmod 2^R\nonumber\\
&=\hspace{-1mm}\left[\left[\alpha x+U+\left(\lfloor \alpha x+U\rfloor-(\alpha x+U)\right)\right]\hspace{-1.25mm}\bmod 2^R-U\right]\hspace{-1.25mm}\bmod 2^R\nonumber\\
&=\hspace{-1mm}\left[\alpha x+\left(\lfloor \alpha x+U\rfloor-(\alpha x+U)\right)\right]\bmod 2^R,\nonumber
\end{align}
where the last equality follows from the distributive law of modulo~\eqref{eq:moddist}.
Note that for every $x\in \RR$, the random variable $Z=\lfloor \alpha x+U\rfloor-(\alpha x+U)$ is uniformly distributed over $(-1,0]$, and is therefore independent of $x$~\cite[Lemma 1]{ez04}. Thus, with subtractive dithers, the additive noise model~\eqref{eq:linearApprox} is exact. We note that even when dithering is not used, under suitable conditions this approximation is quite accurate~\cite{Gray90}.

Although the modulo operation entails loss of information in general, in many situations it is possible to unwrap it, i.e., reconstruct $\alpha x+Z$ from $Y=[\alpha x+Z]\bmod 2^R$ with high probability.\footnote{Here, the term ``high probability'' is used to state that this probability can be made as high as desired by increasing $R$. We explicitly quantify the relation between $R$ and the desired ``no-overload'' probability.} In particular, let
\begin{align}
\tilde{Y}=\left[Y+\frac{1}{2}2^R\right]\bmod 2^R-\frac{1}{2}2^R,
\label{eq:modshift}
\end{align}
and note that conditioned on the \emph{no-overload} event
\begin{align}
\OL\triangleq\left\{\alpha x+Z\in\left[-\frac{1}{2}2^R,\frac{1}{2}2^R\right)\right\},\nonumber
\end{align}
we have that $\tilde{Y}=\alpha x+Z$. Thus, if $\Pr(\OL)$ is close to $1$, the modulo operation has no effect with high probability. Note that $\Pr(\mathcal{E}_{\mathrm{OL}})=\Pr\left(|\alpha x+Z|>\frac{1}{2}2^R\right)$ is identical to the probability that a standard uniform quantizer with dynamic range (support) $2^R/\alpha$ is in \emph{overload}. Thus, when thinking of $x$ as a single observation, it is unclear what  the advantages of a modulo ADC are with respect to a traditional uniform ADC. However, as we illustrate below, the modulo ADC allows exploitation of the statistical structure of the acquired \emph{signal} in a much more efficient manner than the standard ADC.

The following lemma is proved using Chernoff's bound, and will be useful in the sequel for bounding $\Pr(\OL)$ in various scenarios.
\begin{lemma}[{\cite[Lemma 4]{oe15it},\cite[Theorem 7]{fsk13}}]
\label{lem:mixture}
Consider the random variable $Z_{\text{eff}}=\sum_{\ell=1}^{L} \alpha_\ell Z_\ell +\sum_{k=1}^K \beta_k U_k$ 
where $\left\{Z_\ell\right\}_{\ell=1}^L$ are iid Gaussian random variables with zero mean and some variance $\sigma^2_z$ and $\left\{U_k\right\}_{k=1}^K$ are iid random variables, statistically independent of $\left\{Z_\ell\right\}_{\ell=1}^L$, uniformly distributed over the interval $[-\rho/2,\rho/2)$ for some $\rho>0$. Let $\sigma^2_{\text{eff}}\triangleq\mathbb{E}(Z^2_{\text{eff}})$. Then for any $\tau\in\RR$
\begin{align}
\Pr(Z_{\text{eff}}>\tau)=\Pr(Z_{\text{eff}}<-\tau)\leq\exp\left\{-\frac{\tau^2}{2\sigma^2_{\text{eff}}}\right\}.\nonumber
\end{align}
\end{lemma}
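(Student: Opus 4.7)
The plan is to apply a standard Chernoff bound, so the crux will be to show that $Z_{\text{eff}}$ is sub-Gaussian with variance proxy equal to its actual variance $\sigma^2_{\text{eff}}$. First I would note that the equality $\Pr(Z_{\text{eff}}>\tau)=\Pr(Z_{\text{eff}}<-\tau)$ is immediate from symmetry: every $Z_\ell$ is a symmetric (zero-mean Gaussian) random variable, and each $U_k\sim\Unif[-\rho/2,\rho/2)$ is also symmetric around zero, so $Z_{\text{eff}}$ and $-Z_{\text{eff}}$ have the same distribution. It therefore suffices to upper bound $\Pr(Z_{\text{eff}}>\tau)$.

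Next I would invoke the Chernoff bound, $\Pr(Z_{\text{eff}}>\tau)\le\inf_{s>0}e^{-s\tau}\mathbb{E}[e^{sZ_{\text{eff}}}]$, and compute the moment generating function using independence:
\begin{align}
\mathbb{E}[e^{sZ_{\text{eff}}}]=\prod_{\ell=1}^L\mathbb{E}\!\left[e^{s\alpha_\ell Z_\ell}\right]\cdot\prod_{k=1}^K\mathbb{E}\!\left[e^{s\beta_k U_k}\right].\nonumber
\end{align}
For the Gaussian factors, $\mathbb{E}[e^{s\alpha_\ell Z_\ell}]=\exp(s^2\alpha_\ell^2\sigma_z^2/2)$. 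For the uniform factors, a direct integration gives $\mathbb{E}[e^{s\beta_k U_k}]=\sinh(s\beta_k\rho/2)/(s\beta_k\rho/2)$.

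The key technical step, which I expect to be the main obstacle, is the inequality
\begin{align}
\frac{\sinh(x)}{x}\le\exp\!\left(\frac{x^2}{6}\right)\qquad\text{for all }x\in\RR.\nonumber
\end{align}
I would prove it by comparing Taylor coefficients: expanding both sides as power series in $x^2$, this reduces to $(2n+1)!\ge 6^n n!$ for every $n\ge 0$, which follows by induction on $n$ (the ratio of consecutive terms is $(2n+2)(2n+3)/(6(n+1))=(2n+3)/3\ge 1$). Because $\Var(U_k)=\rho^2/12$, this bound says exactly that each uniform factor is sub-Gaussian with variance proxy equal to its variance: $\mathbb{E}[e^{s\beta_k U_k}]\le\exp(s^2\beta_k^2\rho^2/24)$.

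Finally, assembling the pieces gives
\begin{align}
\mathbb{E}[e^{sZ_{\text{eff}}}]\le\exp\!\left(\frac{s^2}{2}\Bigl(\sigma_z^2\sum_\ell\alpha_\ell^2+\frac{\rho^2}{12}\sum_k\beta_k^2\Bigr)\right)=\exp\!\left(\frac{s^2\sigma^2_{\text{eff}}}{2}\right),\nonumber
\end{align}
where I used $\sigma^2_{\text{eff}}=\sigma_z^2\sum_\ell\alpha_\ell^2+(\rho^2/12)\sum_k\beta_k^2$ by independence of all the summands. Substituting into the Chernoff bound and optimizing over $s>0$ via the choice $s^\star=\tau/\sigma^2_{\text{eff}}$ yields $\Pr(Z_{\text{eff}}>\tau)\le\exp(-\tau^2/(2\sigma^2_{\text{eff}}))$, as desired. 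Combined with the symmetry observation at the outset, this proves both parts of the lemma.
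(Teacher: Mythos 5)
Your proof is correct and follows exactly the route the paper indicates: the paper does not prove the lemma itself but cites it and notes it is ``proved using Chernoff's bound,'' and your argument---factorizing the moment generating function, using $\mathbb{E}[e^{tU}]=\sinh(t\rho/2)/(t\rho/2)\leq e^{t^2\rho^2/24}$ to show each uniform term is sub-Gaussian with variance proxy equal to its variance $\rho^2/12$, and optimizing at $s^\star=\tau/\sigma^2_{\text{eff}}$---is precisely the standard derivation in the cited references. The only caveat is inherited from the statement itself rather than your proof: the bound can only hold for $\tau\geq 0$ (for $\tau<0$ the left side approaches $1$ while the right side can be arbitrarily small), and your Chernoff optimization implicitly assumes $\tau>0$, which is the regime in which the paper applies the lemma.
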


\subsection{Modulo ADCs for Scalar Stationary Processes}
\label{subsec:timecor}

Let $\{X_n\}$ be a zero-mean discrete-time stationary Gaussian stochastic process, obtained by sampling a stationary Gaussian process $X(t)$ every $T_S$ seconds. Let
\begin{align}
Y_n=[\alpha X_n+Z_n]\bmod 2^R, \ n=1,2,\ldots\nonumber
\end{align}
be the process obtained by applying a $(R,\alpha)$ mod-ADC on the process $\{X_n\}$, where $\{Z_n\}$ is a $\Unif((-1,0])$ iid noise, and let
\begin{align}
V_n=\alpha X_n+Z_n, \ n=1,2,\ldots\nonumber
\end{align}
be its non-folded version. Our goal is to design a decoder that recovers $V_n$ from the outputs of the modulo ADC, $\{Y_n\}$, with high probability. To that end, we assume the decoder has access to $\{V_{n-1},\ldots,V_{n-p}\}$, an assumption that will be justified in the sequel, and that it knows the auto-covariance function $C_{X}[r]=\mathbb{E}[X_n X_{n-r}]$ of $\{X_n\}$. We apply the following algorithm (See also Figure~\ref{fig:temporal} for a schematic illustration):

\textbf{\textit{Inputs}}: $Y_n$,$\{V_{n-1},\ldots,V_{n-p}\}$, $\{C_X[r]\}$, $R$, $\alpha$.

\textbf{\textit{Output}}: Estimates $\hat{V}_n$, $\hat{X}_n$, for $V_n$ and $X_n$, respectively.

\textbf{\textit{Algorithm}}:
\begin{enumerate}
	\item Compute the optimal linear MMSE predictor for $V_n$ from its last $p$ samples 
	\begin{align}
	\hat{V}^p_n=\sum_{i=1}^p h_i \cdot \left(V_{n-i}+\frac{1}{2}\right)-\frac{1}{2},\label{eq:VnPred}
	\end{align}
	where $\{h_n\}$ is a $p$-tap prediction filter, computed based on $\{C_X[r]\}$ and $\alpha$, and the shift by $1/2$ compensates for $\mathbb{E}(Z_n)$.
	\item Compute 
	\begin{align}
	W_n&=[Y_n-\hat{V}_n^p]\bmod 2^R\nonumber\\
	\tilde{W}_n&=\left[W_n+\frac{1}{2}2^R\right]\bmod 2^R-\frac{1}{2}2^R.\nonumber
	\end{align}
	\item Output $\hat{V}_n=\hat{V}_n^p+\tilde{W}_n$, and $\hat{X}_n=\frac{\hat{V}_n+\tfrac{1}{2}}{\alpha}$.
\end{enumerate}

\begin{remark}
Note that $\{h_n\}$ is the $p$-tap prediction filter for the \emph{quantized} process $\{V_n\}$ from its past, rather than for $\{X_n\}$ from its past. While the loss for using the latter, instead of the former, becomes insignificant when high-resolution assumptions apply, it can be arbitrarily large for oversampled processes, for which high-resolution assumptions never hold~\cite{zke08,oe15c}. The filter coefficients $\{h_n\}$ need only be computed once, and can then be used for all times.
\end{remark}

The following proposition characterizes the performance of the algorithm above. All logarithms in this paper are taken to base $2$, unless stated otherwise.
\begin{proposition}
Let $\hat{V}_n^p$, $\hat{V}_n$ and $\hat{X}_n$ be as defined in the algorithm above, and let $\sigma^2_p=\mathbb{E}(V_n-\hat{V}_n^p)^2$. We have that
\begin{align}
\Pr(\mathcal{E}_{\mathrm{OL}_n})\triangleq \Pr(\hat{V}_n\neq V_n)\leq 2\exp\left\{-\frac{3}{2}2^{2\left(R-\frac{1}{2}\log(12\sigma^2_{p})\right)}\right\},
\end{align}
and
\begin{align}
D&=\mathbb{E}[(X_n-\hat{X}_n)^2|\OLn]\leq\frac{1}{12\alpha^2 (1-\Pr(\mathcal{E}_{\mathrm{OL}_n}))},
\end{align}
where the event $\OLn=\{\hat{V}_n=V_n\}$ is the complement of the event $\mathcal{E}_{\mathrm{OL}_n}=\{\hat{V}_n\neq V_n\}$.
\label{prop:time}
\end{proposition}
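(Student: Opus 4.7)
The plan is to reduce the overload analysis to a tail bound on the one-step prediction error, and then to separately control the quantization error conditional on no overload. Let $E_n \triangleq V_n - \hat{V}_n^p$, so that by construction $\mathbb{E}(E_n^2)=\sigma_p^2$. The first step is to verify, using the distributive law~\eqref{eq:moddist} together with the identity $Y_n=[V_n]\bmod 2^R$, that
\begin{align}
W_n = [Y_n - \hat{V}_n^p]\bmod 2^R = [V_n-\hat{V}_n^p]\bmod 2^R = [E_n]\bmod 2^R.\nonumber
\end{align}
The shift~\eqref{eq:modshift} then gives $\tilde{W}_n = E_n$ whenever $E_n\in[-\tfrac{1}{2}2^R,\tfrac{1}{2}2^R)$, and in that case $\hat{V}_n=\hat{V}_n^p+E_n=V_n$. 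Consequently the overload event satisfies $\mathcal{E}_{\mathrm{OL}_n}\subseteq\{|E_n|\geq \tfrac{1}{2}2^R\}$.

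The second step is to apply Lemma~\ref{lem:mixture} to $E_n$. Substituting $V_k=\alpha X_k+Z_k$ into~\eqref{eq:VnPred}, one finds that $E_n$ is an affine combination of the jointly Gaussian variables $X_n,X_{n-1},\ldots,X_{n-p}$ and the iid dither variables $Z_n,Z_{n-1},\ldots,Z_{n-p}\sim\Unif((-1,0])$. A quick mean computation (using $\mathbb{E}(Z_k)=-\tfrac{1}{2}$ and the shifts $\pm\tfrac{1}{2}$ built into $\hat{V}_n^p$) confirms $\mathbb{E}(E_n)=0$. The Gaussian portion, regardless of any covariance between the $X_k$'s, can be collapsed into a single zero-mean Gaussian of the appropriate variance, while the uniform portion is already in the form required by Lemma~\ref{lem:mixture} (after recentering each $Z_k$ by $\tfrac{1}{2}$, the $U_k$'s are iid $\Unif([-\tfrac{1}{2},\tfrac{1}{2}))$). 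Invoking Lemma~\ref{lem:mixture} with $\tau=\tfrac{1}{2}2^R$ yields
\begin{align}
\Pr(\mathcal{E}_{\mathrm{OL}_n})\leq 2\exp\!\left\{-\frac{2^{2R}}{8\sigma_p^2}\right\}
=2\exp\!\left\{-\tfrac{3}{2}\,2^{2\bigl(R-\tfrac{1}{2}\log(12\sigma_p^2)\bigr)}\right\},\nonumber
\end{align}
which is the first claim after a routine algebraic rearrangement.

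For the distortion claim, conditioned on $\OLn$ we have $\hat{V}_n=V_n=\alpha X_n+Z_n$, hence
\begin{align}
X_n-\hat{X}_n=-\frac{Z_n+\tfrac{1}{2}}{\alpha},\nonumber
\end{align}
so the random variable $(Z_n+\tfrac{1}{2})/\alpha$ is marginally uniform on $(-\tfrac{1}{2\alpha},\tfrac{1}{2\alpha}]$ with second moment $1/(12\alpha^2)$. Since conditioning on an event of probability $\Pr(\OLn)$ can only inflate an expectation of a nonnegative quantity by at most a factor $1/\Pr(\OLn)$, we obtain
\begin{align}
\mathbb{E}\bigl[(X_n-\hat{X}_n)^2\mid\OLn\bigr]\leq\frac{\mathbb{E}(Z_n+\tfrac{1}{2})^2/\alpha^2}{\Pr(\OLn)}=\frac{1}{12\alpha^2(1-\Pr(\mathcal{E}_{\mathrm{OL}_n}))},\nonumber
\end{align}
which is the second claim.

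The only delicate point I expect is verifying that $E_n$ genuinely fits the Gaussian-plus-iid-uniform template of Lemma~\ref{lem:mixture}; this hinges on the dithering argument given earlier in this section, which makes the $Z_k$'s exactly iid $\Unif((-1,0])$ and independent of $\{X_k\}$. Once that modeling step is accepted, the rest is mechanical.
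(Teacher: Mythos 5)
Your proposal is correct and follows essentially the same route as the paper's proof: the modulo distributive law reduces $W_n$ to $[E_n^p]\bmod 2^R$, Lemma~\ref{lem:mixture} gives the tail bound on the prediction error (with your explicit remark that the correlated Gaussian part collapses to a single Gaussian being a slightly more careful reading of the lemma's hypotheses than the paper offers), and the conditional-expectation inflation by $1/\Pr(\OLn)$ gives the distortion bound exactly as in~\eqref{eq:Dtime}. No substantive differences.
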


\begin{proof}
Let $E_n^p\triangleq V_n-\hat{V}_n^p$ be the $p$th order prediction error of the process $\{V_n\}$, and note that its variance $\sigma^2_{p}=\mathbb{E}(E_n^p)^2$ is invariant to $n$ due to stationarity. We have that
\begin{align}
W_n&=[Y_n-\hat{V}^p_n]\bmod 2^R\nonumber\\
&=\left[[V_n]\bmod 2^R-\hat{V}_n^p \right]\bmod 2^R\nonumber\\
&=\left[V_n-\hat{V}_n^p \right]\bmod 2^R\label{eq:WnMagic}\\
&=\left[E_n^p \right]\bmod 2^R,\nonumber
\end{align}
where equation~\eqref{eq:WnMagic} follows from the modulo distributive law~\eqref{eq:moddist}, and constitutes the key advantage of the modulo operation for exploiting temporal correlations. Note that $\tilde{W}_n\in[-\tfrac{1}{2}2^R,\tfrac{1}{2}2^R)$ is a cyclicly shifted version of $W_n\in[0,2^R)$, as in~\eqref{eq:modshift}. Therefore, conditioned on the event
\begin{align}
\OLn=\left\{|E^p_n|<\frac{1}{2}2^R\right\}\nonumber
\end{align}
we have that $\tilde{W}_n=E^p_n$. 

Note that $E^p_n$ is a zero-mean linear combination of statistically independent Gaussian and uniform random variables, such that Lemma~\ref{lem:mixture} applies, and we have that
\begin{align}
\Pr(\mathcal{E}_{\mathrm{OL}_n})&\triangleq\Pr(\tilde{W}_n\neq E^p_n)\nonumber\\
&=\Pr\left(|E^P_n|>\frac{1}{2}2^R\right)\nonumber\\
&\leq 2\exp\left\{-\frac{2^{2R}}{8\sigma^2_{p}}\right\}\nonumber\\
&=2\exp\left\{-\frac{3}{2}2^{2\left(R-\frac{1}{2}\log(12\sigma^2_{p})\right)}\right\},
\label{eq:Petime}
\end{align}
Whenever $\OLn$ occurs, we have that $\hat{V}_n=V_n$, and consequently
\begin{align}
\hat{X}_n=X_n+\frac{Z_n+\frac{1}{2}}{\alpha}\nonumber
\end{align}
and
\begin{align}
&\mathbb{E}[(X_n-\hat{X}_n)^2|\OLn]=\mathbb{E}\left[\left(\frac{Z_n+\frac{1}{2}}{\alpha}\right)^2\bigg|\OLn\right]\nonumber\\
&=\frac{1}{\alpha^2}\frac{\mathbb{E}(Z_n+1/2)^2-\Pr(\mathcal{E}_{\mathrm{OL}_n})\mathbb{E}[(Z_n+1/2)^2|\mathcal{E}_{\mathrm{OL}_n}]}{\Pr(\OLn)}\nonumber\\
&\leq\frac{1}{12\alpha^2 (1-\Pr(\mathcal{E}_{\mathrm{OL}_n}))}.\label{eq:Dtime}
\end{align}
\end{proof}

Proposition~\ref{prop:time} shows that we can make $\Pr(\mathcal{E}_{\mathrm{OL}_n})$ as small as $2e^{-\frac{3}{2}2^{2\delta}}$ by choosing
\begin{align}
R=\frac{1}{2}\log(12\sigma^2_{p})+\delta.
\label{eq:Rdelta}
\end{align}
For example, taking $\delta=2$ bits, results in an overload probability smaller than $10^{-10}$. In particular, unless we take a very small $\delta$, we have that $1-\Pr(\mathcal{E}_{\mathrm{OL}_n})\approx 1$, and consequently, by Proposition~\ref{prop:time}, we will have $D\approx 1/12\alpha^2$. Thus, to simplify expressions in the analysis that follows, we assume $D=1/12\alpha^2$.
We note the tradeoff in choosing $\alpha$: on the one hand, increasing $\alpha$ decreases the MSE distortion $D$, but on the other hand the prediction error variance $\sigma^2_{p}$ of the process $V_n=\alpha X_n+Z_n$ increases with $\alpha$ such that the required rate $R$ for avoiding overload errors increases. Thus, the tradeoff between $D$ and the required quantization rate is controlled through the parameter $\alpha$. We now turn to characterize the tradeoff the developed scheme achieves.

Let $h(A)$ denote the differential entropy of the random variable $A$, and $h(A|B)$ the conditional differential entropy of $A$ given the random variable $B$~\cite{coverthomas}. Recall that for a stationary Gaussian process $\{X_n\}$ with PSD $S_X(e^{j\omega})$ we have that~\cite{gray2006toeplitz}
\begin{align}
h(X_n|X_{n-1},\ldots)=\frac{1}{2\pi}\int_{\pi}^{\pi} \frac{1}{2}\log\left(2\pi e S_X(e^{j\omega})\right)d\omega,
\end{align} 
and in particular $h(X_n|X_{n-1},\ldots)=-\infty$ if and only if $S_X(e^{j\omega})=0$ over a measurable subset of $[-\pi,\pi)$.  Shannon's lower bound~\cite{berger71}, states that the number of bits per sample $R$ produced by any quantizer that attains an MSE distortion $D$ must satisfy
\begin{align}
R(D)\geq R_{\text{SLB}}(D)\triangleq h(X_n|X_{n-1},\ldots)-\frac{1}{2}\log(2\pi e D).\nonumber
\end{align}
It is well-known that for Gaussian processes with finite $h(X_n|X_{n-1},\ldots)$, Shannon's lower bound is asymptotically tight, i.e., $\lim_{D\to 0} R(D)-R_{\text{SLB}}(D)=0$,~\cite{berger71}.

\begin{proposition}
If $h(X_n|X_{n-1},\ldots)>-\infty$, then 
\begin{align}
\lim_{D\to 0}\lim_{p\to\infty} \frac{1}{2}\log(12\sigma^2_{p})=R_{\text{SLB}}(D).\nonumber
\end{align} 
\label{prob:SLB}
\end{proposition}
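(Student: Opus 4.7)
The plan is to compute $\lim_{p\to\infty}\sigma_p^2$ via the Kolmogorov--Szeg\H{o} theorem and then compare the resulting expression to $R_{\text{SLB}}(D)$ term by term. I interpret the stated identity as asserting that the difference of the two sides tends to zero as $D\to 0^+$, since each quantity individually diverges. Because $\sigma_p^2$ is the variance of the \emph{linear} MMSE predictor, it depends only on the PSD of $\{V_n\}$; since $Z_n$ is white with variance $1/12$ and independent of $\{X_n\}$, this PSD is $S_V(e^{j\omega}) = \alpha^2 S_X(e^{j\omega}) + 1/12 \geq 1/12 > 0$. Kolmogorov--Szeg\H{o} then yields
\[
\sigma_\infty^2 \triangleq \lim_{p\to\infty}\sigma_p^2 = \exp\!\Bigl(\tfrac{1}{2\pi}\int_{-\pi}^{\pi} \ln S_V(e^{j\omega})\,d\omega\Bigr),
\]
which is valid for any second-order stationary sequence, not merely Gaussian ones.

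Next, I substitute the relation $\alpha^2 = 1/(12D)$ arising from the assumption $D = 1/(12\alpha^2)$ stated just before the proposition, rewriting $S_V = (S_X + D)/(12D)$. Straightforward manipulation gives
\[
\tfrac{1}{2}\log(12\sigma_\infty^2) = -\tfrac{1}{2}\log D + \tfrac{1}{4\pi}\int_{-\pi}^{\pi}\log\!\bigl(S_X(e^{j\omega}) + D\bigr)\,d\omega.
\]
Combining the formula for $h(X_n\mid X_{n-1},\ldots)$ recalled in the excerpt with the definition of $R_{\text{SLB}}$, one also obtains the clean closed form $R_{\text{SLB}}(D) = \tfrac{1}{4\pi}\int\log S_X\,d\omega - \tfrac{1}{2}\log D$, so the difference telescopes to
\[
\tfrac{1}{2}\log(12\sigma_\infty^2) - R_{\text{SLB}}(D) = \tfrac{1}{4\pi}\int_{-\pi}^{\pi}\log\!\Bigl(1 + \tfrac{D}{S_X(e^{j\omega})}\Bigr)\,d\omega.
\]

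The only genuinely analytic step left is to show that this integral vanishes as $D\to 0^+$, and this is where I expect the main (if mild) obstacle to lie. The hypothesis $h(X_n\mid X_{n-1},\ldots) > -\infty$ is equivalent to $\int_{-\pi}^{\pi} \ln S_X\,d\omega > -\infty$, which in particular forces $S_X(e^{j\omega}) > 0$ for almost every $\omega$. Thus the integrand $\log(1 + D/S_X)$ is nonnegative, monotone nonincreasing in $D$, and tends pointwise a.e.\ to zero. I would then invoke dominated convergence with dominating function $\log(1 + D_0/S_X) = \log(S_X + D_0) - \log S_X$ for any fixed $D_0 > 0$; integrability of this dominator follows from $\int\log(S_X + D_0)\,d\omega \leq 2\pi\log D_0 + (D_0 \ln 2)^{-1}\int S_X\,d\omega < \infty$ (using $\log(1+x)\leq x/\ln 2$ together with the finiteness of $\mathrm{Var}(X_n)$) combined with $\int\log S_X\,d\omega > -\infty$ from the hypothesis. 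An equally valid alternative is a direct monotone-convergence argument on the decreasing family. Once the limit can be passed inside the integral, the right-hand side goes to zero pointwise and the proof is complete.
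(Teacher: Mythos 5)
Your proof is correct and follows essentially the same route as the paper's: both reduce $12\sigma_p^2$ to $\mathbb{E}(E_n'^p)^2/D$ for the auxiliary process with PSD $S_X(e^{j\omega})+D$ (equivalently, your $S_V=(S_X+D)/(12D)$), invoke the Kolmogorov--Szeg\H{o}/Paley--Wiener formula for the limiting prediction error, and then let $D\to 0$. The only difference is that you explicitly justify interchanging the limit $D\to 0$ with the integral via dominated convergence (and correctly read the statement as a vanishing difference), a step the paper asserts without proof.
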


\begin{proof}
We can write
\begin{align}
\frac{1}{2} \log(12\sigma_{p}^2)&=\frac{1}{2}\log\left(\frac{\frac{\sigma_{p}^2}{\alpha^2}}{\frac{1}{12\alpha^2}} \right)= \frac{1}{2}\log\left(\frac{\mathbb{E}(E_n'^p)^2}{D}\right).
\end{align}
where $E_n'^p$ is the $p$th order prediction error of the process $X_n+Z_n/\alpha=X_n+\sqrt{D} \tilde{Z}_n$, where $\tilde{Z}_n\sim\Unif([-\sqrt{12},0))$ iid. 

For a Gaussian process $\{X_n\}$, the condition $h(X_n|X_{n-1},\ldots)>-\infty$ is equivalent to
\begin{align}
\frac{1}{2\pi}\int_{-\pi}^{\pi} \frac{1}{2}\log\left( S_X(e^{j\omega})\right)d\omega >-\infty.\label{eq:finiteEnt}
\end{align}
As a consequence of~\eqref{eq:finiteEnt}, we have that 
\begin{align}
\lim_{D\to 0}\frac{1}{2\pi}&\int_{-\pi}^{\pi} \frac{1}{2}\log\left(2\pi e \left( S_X(e^{j\omega})+D\right)\right)d\omega \nonumber\\
&=h(X_n|X_{n-1},\ldots).\label{eq:highSNRent}
\end{align}
By Paley-Wiener's theorem~\cite{gershogray}, we have that
\begin{align}
\lim_{p\to\infty}\mathbb{E}(E_n'^p)^2=2^{\frac{1}{2\pi}\int_{-\pi}^{\pi} \log\left( S_X(e^{j\omega})+D\right)d\omega }.\label{eq:PredErrorLim}
\end{align}
Combining~\eqref{eq:highSNRent} and~\eqref{eq:PredErrorLim}, we obtain that
\begin{align}
\lim_{D\to 0}\lim_{p\to\infty}\mathbb{E}(E_n'^p)^2=2^{2h(X_n|X_{n-1},\ldots)-2\pi e},\nonumber
\end{align}
for processes with finite entropy rate $h(X_n|X_{n-1},\ldots)$. The result now follows by rearranging terms.
\end{proof}

For the practically important case where $\{X_n\}$ is obtained by oversampling the process $\{X(t)\}$, which is studied in Section~\ref{sec:oversampled}, the assumption $h(X_n|X_{n-1},\ldots)>-\infty$ of Proposition~\ref{prob:SLB} does not hold. Nevertheless, we will show that the modulo ADC nevertheless achieves performance that is close to the information theoretic limits.

Above, we have assumed that the decoder has access to the non-folded samples $\{V_{n-1},\ldots,V_{n-p}\}$. To justify this assumption, an \emph{initialization} step is needed, where the decoder acquires the first $p$ consecutive samples $\{V_{1},\ldots,V_{p}\}$, or estimates of these samples. Once those are obtained, we can apply the algorithm described above, sample-by-sample, and assume the estimate $\hat{V}_n$ produced by the algorithm at time $n$ is correct, and can be used as an input for the algorithm in the next $p$ steps. All samples $V_{p+1},\ldots,V_T$ will be recovered correctly, as long as no overload error occurred within the $T-p$ decoding steps. Thus, by the union bound, we see that the first $T-p$ samples are recovered correctly with probability at least $1-2T e^{-\frac{3}{2}2^{2\delta}}$.\footnote{Note that conditioning on the event that no overload error occurred until time $n$, changes the statistics of $E_{n}^p$. Thus, applying the union bound correctly here requires some more care. See~\cite{oe15c} for more details.}

One conceptually simple way of performing the initialization, i.e., obtaining $\{V_{1},\ldots,V_{p}\}$ is by using a standard scalar quantizer with high-rate for the first $p$ samples. Although the high power consumption of such a quantizer will have a negligible effect on the total power consumption, due to the fact it is used only for a small fraction of the time, this approach has the disadvantage of having to include two ADCs, a high-rate standard ADC and a modulo ADC withing the system. Alternatively, one can perform the initialization using only a $R$ bit modulo ADC in one of the two following ways:
\begin{enumerate}
	\item Increase $\alpha$ gradually until it reaches its final value. For the first sample, $\alpha_1$ will be chosen such that $V_1=\alpha_1 X_1+Z_1$ is w.h.p. within the modulo interval, such that no prediction is needed. Next, we can use $V_1$ in order to predict $V_2=\alpha_2 X_2+Z_2$, which allows to use $\alpha_2>\alpha_1$ such that the prediction error is still within the modulo interval. Continuing this way, we can keep increasing $\alpha$ until convergence.
	\item We can collect a long vector of outputs from the modulo ADC, say $\{Y_1,\ldots,Y_K\}$, $K>p$, and unwrap the modulo operation via the integer-forcing source coding scheme described in the next subsection. The amount of computations per sample required in this method is greater than that of the ``steady state'', i.e., after initialization is complete, but since initialization is rarely performed, the effect on the total complexity is negligible. 
\end{enumerate}

\subsection{Modulo ADCs for Random Vectors}
\label{subsec:spacecor}

Let $\bX\sim\mathcal{N}(\mathbf{0},\bm{\Sigma})$ be a $K$-dimensional Gaussian random vector with zero mean and covariance matrix $\bm{\Sigma}$. Let
\begin{align}
Y_k=[\alpha X_k+Z_k]\bmod 2^R, \ k=1,\ldots K,\nonumber
\end{align}
be obtained by applying $K$ identical $(R,\alpha)$ mod-ADCs, each applied to a different coordinate of the vector $\bX$, where the quantization noises $Z_k\sim\Unif((-1,0])$, $k=1,\ldots,K$, are iid, and let
\begin{align}
V_k=\alpha X_k+Z_k, \ k=1,\ldots K,\nonumber
\end{align}
be its non-folded version. Our goal is to recover $\bV\triangleq[V_1,\ldots,V_K]^T$ from the outputs $\bY\triangleq[Y_1,\ldots,Y_K]^T$ of the modulo ADCs with high probability.

By definition of the modulo operation, we have that $\bV\in \bY+2^R\cdot\ZZ^K$. Consequently, the optimal decoder for $\bV$ from the measurement $\bY$, in terms of minimizing $\Pr(\hat{\bV}\neq\bV)$, is
\begin{align}
\hat{\bV}_{\text{opt}}=\bY+\argmax_{\mathbf{b}\in 2^R\cdot\ZZ^K}f_{\bV}(\bY+\mathbf{b}),\label{eq:VoptSpace}
\end{align}
where $f_{\bV}$ is the probability density function (PDF) of the random vector $\bV$. Although $f_{\bV}$ can be expressed as the convolution of the $K$-dimensional Gaussian PDF of $\alpha\bX$ and the cubic PDF of $\bZ=[Z_1,\ldots,Z_K]^T$, no simpler closed-form expression is known for it. However, as $\alpha$ increases (high-resolution quantization regime), $f_{\bV}$ approaches the pdf of a $\mathcal{N}(\mathbf{-\frac{1}{2}},\alpha^2\bm{\Sigma} +\tfrac{1}{12}\bI)$ random vector, where $\mathbf{\tfrac{1}{2}}$ is a $K$-dimensional vector with all entries equal to $\tfrac{1}{2}$ and $\bI$ is the identity matrix. Consequently, one can use the sub-optimal (in terms of minimizing $\Pr(\hat{\bV}\neq\bV)$) decoder
\begin{align}
&\hat{\bV}_{\text{Gauss}}=\bY\nonumber\\
&+\argmin_{\mathbf{b}\in 2^R\cdot\ZZ^K}(\bY+\mathbf{\frac{1}{2}}+\mathbf{b})^T\left(\alpha^2\bm{\Sigma}+\frac{1}{12}\bI\right)^{-1}(\bY+\mathbf{\frac{1}{2}}+\mathbf{b}).\nonumber
\end{align}
The matrix $(\alpha^2\bm{\Sigma}+\frac{1}{12}\bI)^{-1}$ is positive definite and therefore admits a Cholesky decomposition $\left(\alpha^2\bm{\Sigma}+\tfrac{1}{12}\bI\right)^{-1}=\bL\bL^T$
where $\bL$ is a lower triangular matrix with strictly positive diagonal entries. Setting $\bs=-\bL^T(\bf{\tfrac{1}{2}}+\bY)$, we can write
\begin{align}
\hat{\bV}_{\text{Gauss}}=\bY+2^R\cdot\argmin_{\mathbf{b}\in\ZZ^K}\left\|\bL^T \bb-2^{-R}\cdot\bs \right\|.\label{eq:clp}
\end{align}
Thus, the problem of finding $\hat{\bV}_{\text{Gauss}}$ is equivalent to that of finding the closest point to $2^{-R}\cdot\bs$ in the lattice generated by the basis $\bL^T$. Solving this problem, in general, is known to require running time exponential in $K$~\cite{mg02}, unless P=NP. Thus, for large $K$, finding $\hat{\bV}_{\text{Gauss}}$ is computationally prohibitive. One therefore needs to seek an alternative, low-complexity, decoder for $\bV$ from $\bY$. Next, we review such a decoder, proposed in~\cite{oe13b}, dubbed the \emph{integer-forcing} (IF) source decoder, see Figure~\ref{fig:modADCspatial}. The decoding algorithm works as follows.

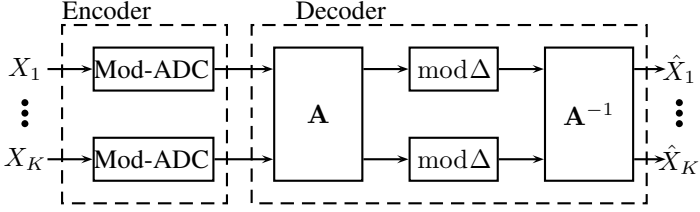
\begin{figure}[t]
\begin{center}
\psset{unit=0.6mm}
\begin{pspicture}(0,0)(250,40)

\rput(0,30){$X_{1}$}\psline{->}(5,30)(15,30)
\psframe(15,25)(42,35)\rput(28,30){Mod-ADC}
\psline{->}(42,30)(55,30)

	\rput(0,22){\Huge$\vdots$}

\rput(0,-20){
\rput(0,30){$X_{K}$}\psline{->}(5,30)(15,30)
\psframe(15,25)(42,35)\rput(28,30){Mod-ADC}
\psline{->}(42,30)(55,30)
	}

\rput(-25,0){
\psframe(80,5)(100,35)\rput(90,20){$\bA$}
\psline{->}(100,30)(110,30)\psframe(110,25)(130,35)\rput(120,30){$\bmod\Delta$}\psline{->}(130,30)(140,30)
\psline{->}(100,10)(110,10)\psframe(110,5)(130,15)\rput(120,10){$\bmod\Delta$}\psline{->}(130,10)(140,10)
\psframe(140,5)(160,35)\rput(150,20){$\bA^{-1}$}
\psline{->}(160,30)(167,30)\rput(170,30){$\hat{X}_1$}
\psline{->}(160,10)(167,10)\rput(170,10){$\hat{X}_K$}
\rput(170,22){\Huge$\vdots$}
}

\psframe[linestyle=dashed](8,0)(45,40)\rput(18,43){Encoder}

\psframe[linestyle=dashed](50,0)(138,40)\rput(70,43){Decoder}

\end{pspicture}
\end{center}
\caption{Schematic architecture for modulo ADC for random vectors.}
\label{fig:modADCspatial}
\end{figure} 

\textbf{\textit{Inputs}}: $\bY$, $\bm{\Sigma}$, $R$, $\alpha$.

\textbf{\textit{Output}}: Estimates $\hat{\bV}_{\text{IF}}$, and $\hat{\bX}_{\text{IF}}$, for $\bV$ and $\bX$, respectively.

\textbf{\textit{Algorithm}}:
\begin{enumerate}
	\item Solve
	\begin{align}
	\bA&=[\ba_1|\cdots|\ba_K]^T\nonumber\\
	&=\argmin_{\substack{{\bar{\bA}\in\ZZ^{K\times K}}\\{|\bar{\bA}|\neq 0}}}\max_{k=1,\ldots,K} \frac{1}{2}\log\left(\bar{\ba}_k^T\left(\bI+12\alpha^2\bm{\Sigma}\right)\bar{\ba}_k\right),\label{eq:Aopt}
	\end{align}
	where $|\bA|$ denotes the absolute value of $\det(\bA)$.\label{step1IFSC}
	\item For $k=1,\ldots,K$, compute
	\begin{align}
	\bar{g}_k&\triangleq \left[\ba_k^T\left(\bY+\mathbf{\frac{1}{2}}\right)\right]\bmod 2^R=[g_k]\bmod 2^R,\label{eq:idenimp}\\
	\tilde{g}_k&\triangleq\left[\bar{g}_k+\frac{1}{2}2^R\right]\bmod 2^R-\frac{1}{2}2^R,\nonumber
	\end{align}
	and set $\tilde{\bg}=[\tilde{g}_1,\ldots,\tilde{g}_K]^T$.\label{stepComp}
	\item Output $\hat{\bV}_{\text{IF}}=\bA^{-1} \tilde{\bg}$, and $\hat{\bX}_{\text{IF}}=\frac{\hat{\bV}_{\text{IF}}}{\alpha}$.
\end{enumerate}

\begin{remark}
The optimization problem~\eqref{eq:Aopt} requires a computational complexity exponential in $K$, in general (unless P=NP). However, the problem of finding the optimal integer matrix $\bA$, need only be solved once for each covariance matrix $\bm{\Sigma}$ and $\alpha$. Thus, even if the solution to this problem is computationally expensive, its cost is normalized by the number of times this solution is used. In practice, one can apply the LLL algorithm~\cite{lll82} in order to obtain a sub-optimal $\bA$ with polynomial complexity in $K$.
\end{remark}

The next proposition, adapted from~\cite[Theorem 2]{oe13b} characterizes the performance of modulo ADCs with the decoder above.
\begin{proposition}
Let $\bA=[\ba_1|\cdots|\ba_K]^T$ be the matrix found in step~\ref{step1IFSC} of the algorithm above, and define \begin{align}
R_{\text{IFSC}}(\bA)=\max_{k=1,\ldots,K}\frac{1}{2}\log\left(\ba_k^T\left(\bI+12\alpha^2\bm{\Sigma}\right)\ba_K^T\right). 
\end{align}
We have that
\begin{align}
\Pr(\mathcal{E}_{\mathrm{OL}})=\Pr(\hat{\bV}_{\text{IF}}\neq \bV)\leq 2K\exp\left\{-\frac{3}{2}\cdot 2^{2\left(R-R_{\text{IFSC}}(\bA)\right)} \right\},\nonumber
\end{align}
and
\begin{align}
D_k&=\mathbb{E}\left[\left(X_k-\hat{X}_{k,\text{IF}}\right)^2\bigg|\OL\right]\leq \frac{1}{12\alpha^2 (1-\Pr(\mathcal{E}_{\mathrm{OL}}))},\nonumber
\end{align}
for all $k=1,\ldots,K$, where the event $\OL=\{\hat{\bV}_{\text{IF}}= \bV\}$ is the complement of the event $\mathcal{E}_{\mathrm{OL}}=\{\hat{\bV}_{\text{IF}}\neq \bV\}$.
\label{prop:IFSC}
\end{proposition}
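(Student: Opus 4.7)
The proof will closely mirror that of Proposition~\ref{prop:time}, with the role of the temporal predictor replaced by an integer linear combination across the spatial coordinates. The plan is to first show that the integer combination of the modulo outputs, reduced modulo $2^R$, equals the integer combination of the un-folded samples $\bV$ reduced modulo $2^R$; this is the analogue of the identity~\eqref{eq:WnMagic}. Since $\bY = \bV - 2^R \cdot \bm{\ell}$ for some integer vector $\bm{\ell}\in\ZZ^K$ (by definition of the modulo operation applied componentwise), and since $\ba_k\in\ZZ^K$, we have $\ba_k^T\bY \equiv \ba_k^T\bV \pmod{2^R}$, so that $\bar g_k = [\ba_k^T(\bV+\mathbf{\tfrac12})]\bmod 2^R$.

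Next I would define $g_k \triangleq \ba_k^T(\bV+\mathbf{\tfrac12}) = \alpha\,\ba_k^T\bX + \ba_k^T(\bZ+\mathbf{\tfrac12})$ and compute its variance. Since $\bX\sim\mathcal{N}(\mathbf{0},\bm{\Sigma})$ is independent of the centered uniform vector $\bZ+\mathbf{\tfrac12}$ whose coordinates are iid with variance $1/12$, one obtains
\begin{align}
\Var(g_k) = \alpha^2\,\ba_k^T\bm{\Sigma}\ba_k + \tfrac{1}{12}\,\ba_k^T\ba_k = \tfrac{1}{12}\,\ba_k^T\bigl(\bI+12\alpha^2\bm{\Sigma}\bigr)\ba_k \le \tfrac{1}{12}\cdot 2^{2R_{\text{IFSC}}(\bA)},
\end{align}
where the last inequality is the definition of $R_{\text{IFSC}}(\bA)$. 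As in the proof of Proposition~\ref{prop:time}, the cyclic shift defining $\tilde g_k$ guarantees that $\tilde g_k = g_k$ whenever $|g_k|<\tfrac12 2^R$, so the per-coordinate overload event is $\{|g_k|\ge \tfrac12 2^R\}$. Because $g_k$ is a sum of independent Gaussian and uniform terms, Lemma~\ref{lem:mixture} applies, giving
\begin{align}
\Pr\!\left(|g_k|>\tfrac{1}{2}2^R\right) \le 2\exp\!\left\{-\frac{(2^R/2)^2}{2\Var(g_k)}\right\} \le 2\exp\!\left\{-\tfrac{3}{2}\cdot 2^{2(R-R_{\text{IFSC}}(\bA))}\right\}.
\end{align}
A union bound over $k=1,\ldots,K$ then yields the claimed bound on $\Pr(\mathcal{E}_{\mathrm{OL}})$, since any single-coordinate overload leads to $\tilde{\bg}\neq\bA(\bV+\mathbf{\tfrac12})$ and hence $\hat{\bV}_{\text{IF}}\neq \bV$ (using that $\bA$ is invertible over $\RR$).

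Finally, for the distortion bound, conditioned on $\OL$ we have $\hat{\bV}_{\text{IF}}=\bV$, which gives $\hat X_{k,\text{IF}} = X_k + (Z_k+\tfrac{1}{2})/\alpha$. Using that $Z_k+\tfrac12\sim\Unif((-\tfrac12,\tfrac12])$ has second moment $1/12$ and applying the same conditioning identity used in~\eqref{eq:Dtime},
\begin{align}
\mathbb{E}\!\left[(X_k-\hat X_{k,\text{IF}})^2\,\big|\,\OL\right] = \frac{\tfrac{1}{12\alpha^2} - \Pr(\mathcal{E}_{\mathrm{OL}})\,\mathbb{E}[(Z_k+\tfrac12)^2/\alpha^2\mid\mathcal{E}_{\mathrm{OL}}]}{\Pr(\OL)} \le \frac{1}{12\alpha^2(1-\Pr(\mathcal{E}_{\mathrm{OL}}))}.
\end{align}
The only genuinely substantive step is the first one — the modulo-distributive identity that lets us replace $\bY$ by $\bV$ inside the inner modulo; after that, everything reduces to a variance computation and an application of Lemma~\ref{lem:mixture}. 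No hard step remains, but some care is needed in the union bound because the per-coordinate overload events are not independent; this is harmless since the union bound does not require independence.
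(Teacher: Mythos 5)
Your proof is correct and follows essentially the same route as the paper's: the modulo-distributive identity to replace $\bY$ by $\bV$ inside the inner modulo, the variance computation $\Var(g_k)=\tfrac{1}{12}\ba_k^T(\bI+12\alpha^2\bm{\Sigma})\ba_k$, Lemma~\ref{lem:mixture} plus a union bound for $\Pr(\mathcal{E}_{\mathrm{OL}})$, and the same conditioning identity as in~\eqref{eq:Dtime} for $D_k$. The only cosmetic difference is that you derive the identity~\eqref{eq:iden} from first principles rather than citing it.
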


The main idea behind the decoder above is the simple observation that for any vector $\ba=[a_1,\ldots,a_k]^T\in\ZZ^K$ and any vector $\bh=[h_1,\ldots,h_K]^T\in\RR^K$ we have that
\begin{align}
\left[\sum_{k=1}^K a_k [h_k]\bmod 2^R\right]\bmod 2^R=\left[\sum_{k=1}^K a_k h_k\right]\bmod 2^R.\label{eq:iden}
\end{align}

\begin{proof}
By the identity~\eqref{eq:iden}, we have that the quantities $\bar{g}_k$, computed in step~\ref{stepComp} of the algorithm, satisfy	
\begin{align}
\bar{g}_k&= \left[\ba_k^T\left(\bY+\mathbf{\frac{1}{2}}\right)\right]\bmod 2^R=[g_k]\bmod 2^R,\nonumber
\end{align}
where
\begin{align}
g_k\triangleq \ba_k^T\left(\bV+\mathbf{\frac{1}{2}}\right)\nonumber.
\end{align}
Furthermore, $\tilde{g}_k\in[-\tfrac{1}{2}2^R,\tfrac{1}{2}2^R)$ is merely a cyclicly shifted version of $\bar{g}_k\in[0,2^R)$. Thus, $\tilde{g}_k=g_k$ if and only if $g_k\in[-\tfrac{1}{2}2^R,\tfrac{1}{2}2^R)$. Consequently, $\hat{\bV}_{\text{IF}}\neq\bV$ if and only if the event
\begin{align}
\mathcal{E}_{\mathrm{OL}}=\bigcup_{k=1}^K\left\{|g_k|\geq \frac{1}{2} 2^R \right\},\nonumber
\end{align}
occurs. Thus, by the union bound,
\begin{align}
\Pr(\mathcal{E}_{\mathrm{OL}})= \Pr(\hat{\bV}_{\text{IF}}\neq\bV)\leq \sum_{k=1}^K \Pr\left(|g_k|\geq \frac{1}{2}2^R\right).\label{eq:PeIFSCunion}
\end{align}
The random variable $g_k$ has zero mean, variance $\sigma^2_k=\ba_k^T\left(\alpha^2\bm{\Sigma}+\frac{1}{12}\bI\right)\ba_k$, and satisfies the conditions of Lemma~\ref{lem:mixture}. We therefore have that
\begin{align}
\Pr\bigg(|g_k|&\geq \frac{1}{2}2^R\bigg)\leq 2\exp\left\{-\frac{2^{2R}}{8\sigma^2_k} \right\}\nonumber\\
&=2\exp\left\{-\frac{3}{2}\cdot 2^{2\left(R-\frac{1}{2}\log(12\sigma^2_k)\right)} \right\}\nonumber\\
&=2\exp\left\{-\frac{3}{2}\cdot 2^{2\left(R-\frac{1}{2}\log\left(\ba_k^T\left(\bI+12\alpha^2\bm{\Sigma}\right)\ba_k\right)\right)} \right\}.\nonumber
\end{align}
Substituting this into~\eqref{eq:PeIFSCunion} and recalling the definition of $R_{\text{IFSC}}(\bA)$, gives
\begin{align}
P_e\leq 2K \exp\left\{-\frac{3}{2}\cdot 2^{2\left(R-R_{\text{IFSC}}(\bA)\right)}\right\}.\label{eq:PeIFSC}
\end{align}
Conditioned on the event $\OL$, i.e., the event that $\mathcal{E}_{\mathrm{OL}}$ did not occur, we have that for all $k=1,\ldots,K$
\begin{align}
D_k&=\mathbb{E}\left[\left(X_k-\hat{X}_{k,\text{IF}}\right)^2\bigg|\OL\right]\nonumber\\
&=\mathbb{E}\left[\left(\frac{Z_k+\frac{1}{2}}{\alpha}\right)^2\bigg|\OL\right]\leq \frac{1}{12\alpha^2 (1-\Pr(\mathcal{E}_{\mathrm{OL}}))},\nonumber
\end{align}
where the last inequality follows similarly to~\eqref{eq:Dtime}. 
\end{proof}

As in the previous subsection, we set 
\begin{align}
R=R_{\text{IFSC}}(A)+\delta,\label{eq:Rspace}
\end{align}
such that 
\begin{align}
\Pr(\mathcal{E}_{\mathrm{OL}})\leq 2K \exp\left\{-\frac{3}{2}\cdot 2^{2\delta}\right\},\label{eq:PeSpace}
\end{align}
and set $D=1/12\alpha^2$, which is a good approximation for the upper bound we derived on $D_k$, provided that $\delta$ is not too small. Consequently, we can write
\begin{align}
R_{\text{IFSC}}(\bA,D)&\triangleq\max_{k=1,\ldots,K} \frac{1}{2}\log\left(\ba_k^T\left(\bI+\frac{1}{D}\bm{\Sigma}\right)\ba_k\right).\label{eq:IFSC}
\end{align}

The tradeoff between rate, distortion and error probability achieved by the $(R,\alpha)$ mod-ADC with an integer-forcing decoder is therefore characterized by equations \eqref{eq:Rspace},~\eqref{eq:PeSpace},  and~\eqref{eq:IFSC}. To put this result in context, we recall the information theoretic benchmark~\cite{oe13b}
\begin{align}
R^{\text{BT}}_{\text{bench}}(D)\triangleq \frac{1}{2K}\log\left|
\bI+\frac{1}{D}\bm{\Sigma}
\right|,\nonumber
\end{align}
that approximates the minimal quantization rate, per quantizer, required by any computationally and delay unlimited system in order to achieve MSE of at most $D$ in the reconstructions of each $X_k$, $k=1,\ldots,K$.
Thus,
\begin{align}
&R_{\text{IFSC}}(\bA,D)-R^{\text{BT}}_{\text{bench}}(D)\nonumber\\
&=\frac{1}{2}\log\left(\frac{\max_{k=1,\ldots,K}\ba_k^T\left(\bI+\frac{1}{D}\bm{\Sigma}\right)\ba_k}{\left|\bI+\frac{1}{D}\bm{\Sigma} \right|^\frac{1}{K}} \right).\label{eq:spaceGap}
\end{align}
It is easy to show that the right hand side of~\eqref{eq:spaceGap} is non-negative~\cite{oe13b}. However, typically the gap is quite small, and under certain distributions of practical interest on $\bm{\Sigma}$, the cumulative distribution function (CDF) of this gap can be characterized~\cite{de17}. A comprehensive comparison between $R_{\text{IFSC}}(D)$ and $R^{\text{BT}}_{\text{bench}}(D)$ was performed in~\cite{oe13b}, and it was demonstrated that they are usually quite close. 

We further remark that the integer-forcing source decoder is merely one sub-optimal algorithm for solving~\eqref{eq:clp}. It is an interesting avenue for future research to develop alternative algorithms, with firm performance guarantees (as in~\eqref{eq:Rspace},~\eqref{eq:PeSpace} and~\eqref{eq:IFSC}), for the same problem, or more ambitiously, for solving~\eqref{eq:VoptSpace}.

\section{Oversampled Modulo-ADC}
\label{sec:oversampled}

In Section~\ref{subsec:timecor} we have demonstrated the effectiveness of the modulo ADC architecture  for acquiring stochastic processes that are correlated in time. In particular, we have shown that the performance of a modulo ADC depends on the variance of the prediction error of the process $\{V_n=\alpha X_n+Z_n\}$, rather than the variance of $V_n$ itself. However, when designing an ADC, it is desirable to impose as few constraints as possible on the signals that will be fed to the ADC.  Therefore, assuming that $\{X_n\}$ is such that $\{V_n\}$ is highly predictable may be too restrictive.

Nevertheless, recalling that the process $\{X_n\}$ is obtained by sampling a continuous-time process $X(t)$, we observe that if the sampling rate is higher than Nyquist's rate, $\{X_n\}$ will be bandlimited,\footnote{We say that a discrete-time process $\{X_n\}$ is bandlimited, if there exists some $\gamma<\pi$ such that $S_X(e^{j\omega})=0$ for all $\omega\in(-\pi,-\gamma)\cup (\gamma,\pi)$. Since our analysis takes quantization noise into account, it is quite robust to slight deviations from the assumption that $S_X(e^{j\omega})$ is strictly band limited. In particular, as long as $S_X(e^{j\omega})\ll D$, for all $\omega\in(-\pi,-\gamma)\cup (\gamma,\pi)$, where $D$ is the target MSE distortion, our analysis remains valid.} and consequently, $\{V_n\}$ will be highly predictable no matter what the precise PSD of $\{X_n\}$ happens to be. In fact, this observation can be viewed as the rationale underlying $\Sigma\Delta$-conversion. In particular, a $\Sigma\Delta$-converter is information theoretically equivalent to a differential pulse-code modulator (DPCM) whose input is a bandlimited signal with flat spectrum~\cite{oe15c}.

While having many advantages, the implementation of $\Sigma\Delta$ converters is more involved than that of traditional scalar uniform quantizers. The main challenge in the design of $\Sigma\Delta$ converters is the need to produce the quantization error, and then apply a filter to this analog signal. A major obstacle is that the generation of the quantization error requires to first quantize the current sample, then apply a digital-to-analog converter (DAC) to produce the analog representation of the quantizer's output, and finally to subtract this representation from the original sample. See Figure~\ref{fig:SigDel}. The quantizer and the DAC need to be matched as otherwise the produced quantization error is inaccurate. This, however,  turns out to be quite difficult to achieve, unless the quantizer is a simple sign detector ($1$-bit quantizer).

To circumvent the challenges listed above, we develop an oversampled modulo ADC, as an alternative to $\Sigma\Delta$-conversion. The \emph{only} assumptions made on the input process $\{X(t)\}$ is that it is bandlimited with maximal frequency at most $B$, and that its variance is at most $\sigma^2$. The developed universal architecture is as follows. See Figure~\ref{fig:temporal}.

\textbf{\textit{Analog-to-digital conversion}}: The process $X(t)$ is uniformly sampled every $T_S=1/2LB$ seconds, $L>1$, such that the sampling rate is $L$ times above Nyquist's rate. Each sample of the obtained discrete-time process $\{X_n\}$ is then discretized using an $(R,\alpha)$ mod-ADC, resulting in the quantized process $\{Y_n=[\alpha X_n+Z_n]\bmod 2^R\}$.

As above, we define the unfolded process $\{V_n=\alpha X_n+Z_n\}$. The decoding procedure assumes $\{V_{n-1},\ldots,V_{n-p}\}$ are given, and computes an estimate for $V_n$, based on $Y_n$.

\textbf{\textit{Inputs}}: $Y_n$, $\{V_{n-1},\ldots,V_{n-p}\}$, $\sigma^2$, $L$, $R$, $\alpha$.

\textbf{\textit{Outputs}}: Estimates $\hat{V}_n$ and $\hat{X}_n$ for $V_n$ and $X_n$, respectively.

\textbf{\textit{Algorithm}}: The algorithm is exactly the same as that in Section~\ref{subsec:timecor}, with only one difference. Here $\{C_X[r]\}$ is unknown. Thus, for the computation of the $p$-tap prediction filter $\{h_n\}$, we assume the PSD of $\{X_n\}$ is
\begin{align}
S_X(e^{j\omega})=\begin{cases}
L\sigma^2 & \omega\in \left[-\frac{\pi}{L},\frac{\pi}{L}\right)\\
0 & \omega\notin \left[-\frac{\pi}{L},\frac{\pi}{L}\right)
\end{cases},\label{eq:PDSflat}
\end{align}
even though this assumption may, and is most likely to, be wrong.

\textbf{\textit{Final post-processing}}: After collecting a long sequence of estimates $\{\hat{X}_1,\ldots,\hat{X}_N\}$ we apply a non-causal low pass filter 
\begin{align}
G(e^{j\omega})=\begin{cases}
\frac{12\alpha^2 L \sigma^2}{1+12\alpha^2 L \sigma^2} & \text{if }\omega\in\left[-\frac{\pi}{L},\frac{\pi}{L}\right]\\
0 & \text{if }\omega\notin\left[-\frac{\pi}{L},\frac{\pi}{L}\right]
\end{cases}\nonumber
\end{align}
on them, to obtain the sequence $\{\hat{X}^{\text{LPF}}_1,\ldots,\hat{X}^{\text{LPF}}_N\}$.

The advantages over $\Sigma\Delta$ conversion are clear: the only processing done in the analog domain is sampling and applying a modulo ADC, whereas all filtering operations are done digitally at the decoder. 

Proposition~\ref{prop:time} provides an upper bound on the error probability $\Pr(\mathcal{E}_{\mathrm{OL}_n})=\Pr(\hat{V}_n\neq V_n)$ in terms of $R-\frac{1}{2}\log(12\sigma^2_p)$. However, Proposition~\ref{prob:SLB}, which characterizes the scaling of $\frac{1}{2}\log(12\sigma^2_p)$ with $D$, does not apply here for two reasons. The first is that we use a mismatched prediction filter here, due to the unknown PSD of $\{X_n\}$, and the second is that whatever the exact PSD truns out to be, it is assumed to be supported on the frequency interval $[-\tfrac{\pi}{L},\tfrac{\pi}{L}]$, such that $h(X_n|X_{n-1},\ldots)=-\infty$, and the high-resolution assumption never holds. Instead, we prove the following.

\begin{proposition}
Let $\{X_n\}$ be a zero-mean stationary process with variance $\mathbb{E}(X^2_n)\leq\sigma^2$ and PSD supported in frequency interval $[-\tfrac{\pi}{L},\tfrac{\pi}{L}]$. Let $V_n=\alpha X_n+Z_n$ where $Z_n\sim\Unif([-1,0))$, and $\hat{V}_n^p$ be as in~\eqref{eq:VnPred}, where $\{h_n\}$ is the optimal linear MMSE $p$-tap prediction filter for $V_n$, from its past samples $\{V_{n-1},\ldots,V_{n-p}\}$, designed under the assumption that $S_X(e^{j\omega})$ is as in~\eqref{eq:PDSflat}. Then
\begin{align}
\lim_{p\to\infty}12\sigma^2_p\leq \left(1+12\alpha^2 L\sigma^2\right)^{\frac{1}{L}}.\nonumber
\end{align}
\label{prop:mismatchedpred}
\end{proposition}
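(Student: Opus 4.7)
The plan is to reduce the statement to a standard mismatched-Wiener computation based on the Kolmogorov--Szeg\H{o} formula. Let the true PSD of $V_n$ be $S_V(e^{j\omega}) = \alpha^2 S_X(e^{j\omega})+\tfrac{1}{12}$ and the assumed (model) PSD be $S_V^{\text{flat}}(e^{j\omega}) = \alpha^2 L\sigma^2+\tfrac{1}{12}$ for $|\omega|\leq \pi/L$ and $\tfrac{1}{12}$ elsewhere. Writing the prediction-error transfer function of the mismatched filter as $A^{(p)}(e^{j\omega})=1-\sum_{i=1}^{p}h_i e^{-ji\omega}$, the MSE of this filter applied to the actual process is $\sigma_p^2=\tfrac{1}{2\pi}\int_{-\pi}^{\pi}|A^{(p)}(e^{j\omega})|^2 S_V(e^{j\omega})d\omega$. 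As $p\to\infty$, $A^{(p)}$ converges in $L^2$ on the unit circle to $A^*$, the causal, minimum-phase innovations filter of $S_V^{\text{flat}}$; the Kolmogorov--Szeg\H{o} theorem gives $|A^*(e^{j\omega})|^2 S_V^{\text{flat}}(e^{j\omega})=\gamma$ a.e., with $\gamma=2^{\frac{1}{2\pi}\int_{-\pi}^{\pi}\log S_V^{\text{flat}}(e^{j\omega})d\omega}$. Consequently,
\begin{align}
\lim_{p\to\infty}\sigma_p^2 \;=\; \gamma\cdot\frac{1}{2\pi}\int_{-\pi}^{\pi}\frac{S_V(e^{j\omega})}{S_V^{\text{flat}}(e^{j\omega})}d\omega.\nonumber
\end{align}

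Next I would bound this ratio integral by splitting at $|\omega|=\pi/L$. Since $S_V=S_V^{\text{flat}}=\tfrac{1}{12}$ outside the band, the out-of-band contribution is exactly $1-1/L$. For the in-band part,
\begin{align}
\frac{1}{2\pi}\int_{|\omega|\leq \pi/L}\frac{\alpha^2 S_X(e^{j\omega})+\tfrac{1}{12}}{\alpha^2 L\sigma^2+\tfrac{1}{12}}d\omega \;=\; \frac{\alpha^2\,\mathbb{E}(X_n^2)+\tfrac{1}{12L}}{\alpha^2 L\sigma^2+\tfrac{1}{12}} \;\leq\; \frac{1}{L},\nonumber
\end{align}
where the last inequality uses the hypothesis $\mathbb{E}(X_n^2)\leq \sigma^2$ (equivalently $\alpha^2\mathbb{E}(X_n^2)\leq (\alpha^2 L\sigma^2)/L$) applied to the numerator. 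Adding the two contributions gives $\tfrac{1}{2\pi}\int S_V/S_V^{\text{flat}}\,d\omega\leq 1$.

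Finally, because $S_V^{\text{flat}}$ takes only two values on sets of Lebesgue measure $2\pi/L$ and $2\pi(1-1/L)$, the K--S geometric mean evaluates explicitly to $\gamma=(\alpha^2 L\sigma^2+\tfrac{1}{12})^{1/L}(\tfrac{1}{12})^{1-1/L}=\tfrac{1}{12}(1+12\alpha^2 L\sigma^2)^{1/L}$, and combining with the ratio bound yields $\lim_{p\to\infty} 12\sigma_p^2 \leq 12\gamma = (1+12\alpha^2 L\sigma^2)^{1/L}$, as claimed. The only nontrivial analytic point --- and thus the main obstacle --- is the limit passage $A^{(p)}\to A^*$ in $L^2$ together with the commutation with the MSE integral; this is a standard Toeplitz/orthogonal-projection fact which relies on $S_V^{\text{flat}}$ being uniformly bounded away from $0$ and from $\infty$ (so that $\log S_V^{\text{flat}}\in L^{\infty}\subset L^1$ and the Szeg\H{o} function behaves well). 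Once this convergence is in hand, everything else is a short direct calculation.
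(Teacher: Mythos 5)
Your proposal is correct and follows essentially the same route as the paper's own proof: both invoke the Kolmogorov--Szeg\H{o} whitening relation for the infinite-order predictor designed against the flat model PSD (so that $|1-H|^2 = \gamma/S_V^{\text{flat}}$ with $\gamma=\tfrac{1}{12}(1+12\alpha^2 L\sigma^2)^{1/L}$), evaluate the mismatched MSE as $\tfrac{1}{2\pi}\int S_V|1-H|^2\,d\omega$, split the integral at $|\omega|=\pi/L$, and close with $\mathbb{E}(X_n^2)\le\sigma^2$. Your explicit remark about justifying the $L^2$ convergence $A^{(p)}\to A^*$ (valid here since $S_V^{\text{flat}}$ is bounded away from $0$ and $\infty$) is a point the paper passes over silently, but it does not change the argument.
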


\begin{proof}
Let
\begin{align}
S_{\tilde{V}}(e^{j\omega})=\begin{cases}
\alpha^2 L\sigma^2+1/12 & \omega\in[-\tfrac{\pi}{L},\tfrac{\pi}{L}]\\
1/12 & \omega\notin[-\tfrac{\pi}{L},\tfrac{\pi}{L}]
\end{cases},\label{eq:PSDdes}
\end{align}
and let $H_p(e^{j\omega})$ be the frequency response of the prediction filter $\{h_n\}$, which is designed with respect to~\eqref{eq:PSDdes}.  Further, let $H(e^{j\omega})=\lim_{p\to\infty}H_p(e^{j\omega})$. By the basic principles of optimal linear MMSE prediction, we have that
\begin{align}
S_{\tilde{V}}(e^{j\omega})|1-H(e^{j\omega})|^2=2^{\frac{1}{2\pi}\int_{-\pi}^{\pi}\log(S_{\tilde{V}}(e^{j\omega}))d\omega}.
\label{eq:whiteningCond}
\end{align} 
Therefore, combining~\eqref{eq:PSDdes} and~\eqref{eq:whiteningCond}, we see that
\begin{align}
|1-H(e^{j\omega})|^2=\begin{cases}
\left(1+12\alpha^2 L\sigma^2\right)^{\frac{1}{L}-1} & \omega\in[-\tfrac{\pi}{L},\tfrac{\pi}{L}]\\
\left(1+12\alpha^2 L\sigma^2\right)^{\frac{1}{L}} & \omega\notin[-\tfrac{\pi}{L},\tfrac{\pi}{L}]
\end{cases}.
\end{align}
Applying this filter on the ``actual'' process $V_n={\alpha X_n+Z_n}$, whose PSD is
\begin{align}
S_V(e^{j\omega})=\begin{cases}
\alpha^2 S_X(e^{j\omega})+1/12 & \omega\in[-\tfrac{\pi}{L},\tfrac{\pi}{L}]\\
1/12 & \omega\notin[-\tfrac{\pi}{L},\tfrac{\pi}{L}]
\end{cases},\nonumber
\end{align}
we get
\begin{align}
\lim_{p\to\infty}12\sigma^2_p&=\lim_{p\to\infty}12\mathbb{E}(V_n-V^p_n)^2\nonumber\\
&=\frac{1}{2\pi}\int_{-\pi}^{\pi}S_V(e^{j\omega})|1-H(e^{j\omega})|^2d\omega\nonumber\\
&=\frac{\left(1+12\alpha^2 L\sigma^2\right)^{\frac{1}{L}}}{2\pi}\bigg[\int_{\omega\notin[-\pi/L,\pi/L]} 1 d\omega\nonumber\\
&+\int_{-\pi/L}^{\pi/L}\left(1+12\alpha^2 L\sigma^2\right)^{-1}\left(1+12\alpha^2 S_X(e^{j\omega})\right)d\omega\bigg]\nonumber\\
&\leq \left(1+12\alpha^2 L\sigma^2\right)^{\frac{1}{L}},
\end{align}
where the last inequality follows from our assumption that $\frac{1}{2\pi}\int_{-\pi/L}^{\pi/L}S_X(e^{j\omega})d\omega=\mathbb{E}(X_n^2)\leq \sigma^2$.
\end{proof}

It follows from Proposition~\ref{prop:time} combined with Proposition~\ref{prop:mismatchedpred}, that for a quantization rate of
\begin{align}
R=\delta+\frac{1}{L}\frac{1}{2}\log\left(1+12\alpha^2 L\sigma^2 \right).\label{eq:Roversampled}
\end{align}
the proposed system achieves $\Pr(\mathcal{E}_{\mathrm{OL}_n})\leq 2\exp\{-\frac{3}{2}2^{2\delta}\}$, for all input processes with bandwidth $\leq B$ and variance $\leq \sigma^2$. 

After low-pass filtering with $G(e^{j\omega})$, we get by a similar analysis to that done in Section~\ref{subsec:timecor} and in~\cite{oe15c}, that for long enough $N$ such that the LPF can be treated as ideal, we have that
\begin{align}
D&=\mathbb{E}\left[(X_n-\hat{X}^{\text{LPF}}_n)^2 \bigg| \bigcap_{n=1}^N\{ \hat{V}_n=V_n\}\right]\nonumber\\
&\leq \frac{\sigma^2}{1+12\alpha^2 L\sigma^2}\frac{1}{1-\Pr\left(\bigcap_{n=1}^N\{ \hat{V}_n=V_n\}\right)}\nonumber\\
&\leq \frac{\sigma^2}{1+12\alpha^2 L\sigma^2}\frac{1}{1-N\Pr(\mathcal{E}_{\mathrm{OL}_n})}\nonumber\\
&\leq \frac{\sigma^2}{1+12\alpha^2 L\sigma^2}\frac{1}{1-2N\exp\{-\frac{3}{2}2^{2\delta}\}}.
\end{align}
Thus, for large enough $\delta$ such that the total overload probability is small, i.e., 
\begin{align}
2N\exp\left\{-\frac{3}{2}2^{2\delta}\right\}\ll 1,\label{eq:deltaOversampled}
\end{align}
we have that our system achieves distortion $\approx D$ with
\begin{align}
 R=\frac{1}{L}\frac{1}{2}\log\left(\frac{\sigma^2}{D}\right)+\delta.\label{eq:RDoversampled}
\end{align}
The term $\frac{1}{L}\frac{1}{2}\log(\frac{\sigma^2}{D})$ is the rate-distortion function of a source with PSD as in~\eqref{eq:PDSflat}. Thus, up to the loss of $\delta$ bits per sample, due to the one dimensional quantizer we are using, whose size is dictated by~\eqref{eq:deltaOversampled}, our system is optimal in the following minimax sense: no system can attain a better tradeoff between $R$ and $D$ simultaneously for all processes with bandwidth at most $B$ and variance at most $\sigma^2$.

The multiplicative increase in quantization rate of the developed system, with respect to the fundamental rate-distortion limit, is $(\frac{1}{2}\log\left(\frac{\sigma^2}{D}\right)+L\delta)/(\frac{1}{2}\log\left(\frac{\sigma^2}{D}\right))$. If $X(t)$ were sampled at its Nyquist rate, rather than $L$ times above it, standard uniform scalar quantization would have achieved similar overload probability and distortion with only a $(\frac{1}{2}\log\left(\frac{\sigma^2}{D}\right)+\delta)/(\frac{1}{2}\log\left(\frac{\sigma^2}{D}\right))$ multiplicative increase in rate with respect to the fundamental limit. The disadvantage of the latter approach is that it requires to use a high-resolution quantizer for each sample, whereas the scheme developed here, allows to reduce the number of quantization bits per sample, at the expanse of an increased sampling rate. Thus, just like $\Sigma\Delta$ conversion, the scheme developed here allows to replace slow but high-resolution ADCs, with fast low-resolution ones.

\section{Implementation via Ring Oscillators}
\label{sec:ringosc}

In this Section we develop an architecture for a circuit implementing a modulo ADC, and provide a mathematical model for its input-output characteristic. Our implementation is essentially based on converting the input voltage into phase, which can naturally only be observed modulo $2\pi$, and then quantizing the phase. To that end, we use \emph{ring oscillator ADCs}, as described next.

Consider a closed-loop cascade of $N$ inverters, where $N$ is an odd number, all controlled with the same voltage $V_{dd}$, see Figure~\ref{fig:ringosc}. This circuit, which is referred to as a ring oscillator can act as an ADC with sampling period $T_s$, when $V_{dd}$ is set to $V_{in}(t)=g(X(t))$, where $X(t)$ is the analog signal to be converted to a digital one and $g(\cdot)$ is a function to be specified, and the state (`$0$' or `$1$', corresponding to `low' or `high') of each inverter is measured every $T_s$ seconds.

It is well known that the time it takes for a non-ideal inverter's output to respond to a change in its input is a function of $V_{dd}$~\cite{rcn03}, which we denote by $\Delta(V_{dd})>0$. Taking this delay into account, a moment of reflection reveals that at each time instance, exactly one pair of adjacent inverters are at the same state whereas all other pairs of adjacent inverters are at distinct states. Denote by $I\in\{1,\ldots,N\}$ the index of the first inverter within the pair that shares the same state, and denote its state by $B\in\{0,1\}$, i.e., the adjacent pair of inverters with the same state are inverter $I$ and inverter $[I+1]\bmod N$, and their state is $B$. With this notation, we can uniquely identify the states of all $N$ inverters at time $t$ with the number $Q_t=(I_t-1)+N\cdot [I_t+B_t]\bmod 2\in\{0,\ldots,2N-1\}$. See Figure~\ref{fig:inverterstates}.

A crucial observation is that the process $Q_t$ cyclically oscillates in increments of $+1$ modulo $2N$. More formally stated, if $t'>t$ is the earliest time where $Q_{t'}\neq Q_t$, then $Q_{t'}=[Q_t+1]\bmod 2N$. We designate by $V_n$ the number of increments that occurred in the process $\{Q_t\}$ within the time interval $[nT_S,(n+1) T_s)$, and define the output of the induced modulo ADC as
\begin{align}
Y_{n}\triangleq [V_{n}]\bmod 2N=[Q_{(n+1)T_s}-Q_{nT_s}]\bmod 2N.\nonumber
\end{align}

Next, we relate $V_{n}$ to the process $V_{in}(t)$. To this end, we make the simplifying assumption that $X(t)$ is constant within each time interval $[nT_s,(n+1)T_s)$, and consequently, so is $V_{in}(t)$. This assumption can be made exact by adding a sample-and-hold circuit to the system. Assuming the function $\Delta(V_{dd})$ is identical for all $N$ inverters, we have that
\begin{align}
Q_{nT_s}&=\left[ \left\lfloor\sum_{k=-\infty}^{n-1}\frac{T_s}{\Delta(V_{in}(kT_s))}\right\rfloor\right]\bmod 2N,\nonumber
\end{align}
and consequently,
\begin{align}
&Y_{n}=\Bigg[\left[\left\lfloor\sum_{k=-\infty}^n \frac{T_s}{\Delta(V_{in}(kT_s))}\right\rfloor\right]\bmod 2N\nonumber\\
&~~~-\left[ \left\lfloor\sum_{k=-\infty}^{n-1}\frac{T_s}{\Delta(V_{in}(kT_s))}\right\rfloor\right]\bmod 2N\Bigg]\bmod 2N\nonumber\\
&\hspace{-0.5mm}=\left[\left\lfloor\sum_{k=-\infty}^n \frac{T_s}{\Delta(V_{in}(kT_s))}\right\rfloor \hspace{-1.25mm} - \hspace{-1.25mm} \left\lfloor\sum_{k=-\infty}^{n-1}\frac{T_s}{\Delta(V_{in}(kT_s))}\right\rfloor\right]\hspace{-1.25mm}\bmod 2N,\nonumber
\end{align}
where the last equality follows from the modulo distributive law~\eqref{eq:moddist}. Defining the ``quantization error''
\begin{align}
Z_n=\left\lfloor\sum_{k=-\infty}^n \frac{T_s}{\Delta(V_{in}(kT_s))}\right\rfloor \hspace{-1.25mm} - \hspace{-1.25mm} \sum_{k=-\infty}^n \frac{T_s}{\Delta(V_{in}(kT_s))}\in(-1,0],\nonumber
\end{align}
we can write
\begin{align}
Y_n&=\Bigg[\sum_{k=-\infty}^n \frac{T_s}{\Delta(V_{in}(kT_s))}+Z_{n}\nonumber\\
&-\sum_{k=-\infty}^{n-1} \frac{T_s}{\Delta(V_{in}(kT_s))}-Z_{n-1} \Bigg]\bmod 2N\nonumber\\
&=\left[\frac{T_s}{\Delta(V_{in}(nT_s))}+Z_{n}-Z_{n-1} \right]\bmod 2N.\nonumber
\end{align}
Let us now define the function
\begin{align}
f(x)=\frac{1}{\Delta(x)},\nonumber
\end{align}
which corresponds to the oscillation frequency of our circuit, and is dictated by the characteristics of the inverters at hand, and let us also take the function $g(\cdot)$ to be affine, such that $V_{in}(t)=a+bX(t)$. We further define the discrete time process $X_n=X(nT_s)$, for all $n\in\mathbb{N}$. We have therefore obtained the model
\begin{align}
Y_n=\left[T_s \cdot f(a+bX_n)+Z_n-Z_{n-1} \right]\bmod 2N.\label{eq:modADCringModel}
\end{align}
In general, the quantization noise process $\{Z_n\}$ is a deterministic function of the process $\{X_n\}$. Nevertheless, as in the analysis of the ideal modulo ADC, in the sequel we make the simplifying assumption that it is an iid process with $Z_n\sim\Unif((-1,0])$.

If $f(\cdot)$ were an affine function itself, with an appropriate choice of the parameters $a,b$ we could have induced the model
\begin{align}
Y_n=[\alpha X_n+Z_n-Z_{n-1}]\bmod 2^R,\nonumber
\end{align}
where $R=\log(2N)$, which is identical to the ideal $(R,\alpha)$ mod-ADC, up to the fact that the quantization noise $Z_n-Z_{n-1}$ is now a first order moving-average (MA) process rather than a white process. In practice, however, it is difficult to construct inverters for which $f(\cdot)$ is approximately affine within a large range. The effect of nonlinearities of $f(\cdot)$ on the performance of the modulo ADC is numerically studied in the next section.

\section{Numerical Experiments}
\label{sec:numerical}

We have conducted numerical simulations for the performance of a ring oscillator based modulo ADC, where the input is an oversampled process, as in Section~\ref{sec:oversampled}. In our simulations, we have assumed that the inverters were produced using a CMOS technology. The corresponding function $f(V_{\text{in}})$ relating the input voltage to the output frequency of the oscillator, which was introduced in Section~\ref{sec:ringosc}, is shown in Figure~\ref{fig:freqLUT}, as obtained using a PSpice simulation.

\begin{figure}
\includegraphics[width=0.5\textwidth]{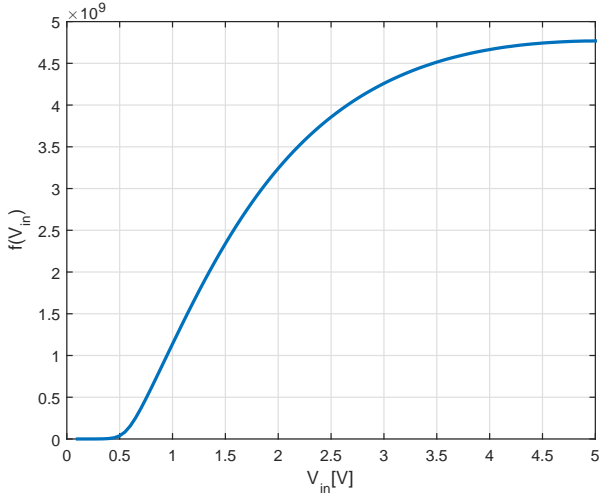}
\caption{The voltage to output frequency function $f(V_{\text{in}})$.}
\label{fig:freqLUT}
\end{figure}

\subsection{Design of System Parameters}

In all our simulations, we have designed the modulo ADC and the corresponding decoder as described in Section~\ref{sec:oversampled}, i.e., under the assumption that the input signal $X(t)$ is a Gaussian stationary process with zero mean and variance $\sigma^2$, whose PSD is flat within the frequency interval $[-B,B]$ and zero outside this interval. The sampling rate is a factor of $L>1$ above the Nyquist rate, such that the sampling period is $T_s=\tfrac{1}{2LB}$ seconds.

Given the oversampling ratio $L$, the number of inverters $N$, and the above assumptions on the statistics of $X(t)$, the design of the modulo ADC and its corresponding decoder consists of:
\begin{enumerate}
\item Choosing the shift and scaling parameters $a$ and $b$ for the modulo ADC such that $V_{\text{in}}(t)=a+bX(t)$;
\item Designing the $p$-tap prediction filter $\{h_n\}$ for $V_n=T_s f(a+bX_n)+Z_n-Z_{n-1}$ given the past samples $\{V_{n-1},\ldots,V_{n-p}\}$;
\item Designing a $2k+1$-tap noncausal smoothing filter $\{g_n\}$ for estimating $X_n$ from $\{V_{n-k},\ldots,V_{n+k}\}$.
\end{enumerate}

The decoding procedure consists of recovering an estimate $\{\hat{V}_n\}$ for $\{V_n\}$ from the modulo ADC's outputs $\{Y_n=[T_s f(a+bX_n)+Z_n-Z_{n-1}]\bmod 2N\}$, by applying the decoding procedure described in Section~\ref{sec:oversampled} with the prediction filter $\{h_n\}$. Then, the estimate $\{\hat{X}_n\}$ is produced by applying the smoothing filter $\{g_n\}$ to the process $\{\hat{V}_n\}$, which is referred to as final post-processing in Section~\ref{sec:oversampled} . The filters $\{h_n\}$ and $\{g_n\}$ are chosen as the MMSE-optimal linear prediction and smoothing filters, respectively. Calculating the coefficients of $\{h_n\}$ requires knowledge of  the second-order statistics of the process $\{V_n\}$. This in turn, can be (numerically) calculated from the pairwise distribution of $\{X_n,X_{n-m}\}$, $m=0,\ldots,p$, which is fully characterized by our assumption that $\{X_n\}$ is a Gaussian process with PSD $S_X(e^{j\omega})$ as in~\eqref{eq:PDSflat}. Calculating the coefficients of $\{g_n\}$ requires, in addition, the
joint second-order statistics of the processes $\{X_n,V_n\}$, which can either be calculated numerically, or via Bussgang's Theorem~\cite{bussgang52}.

We apply the developed modulo ADC architecture to processes of length $T$ discrete samples. The parameters $a$ and $b$ are chosen as follows: Let $P_e=\Pr(\cup_{t=1}^T \hat{V}_t\neq V_t)$ be the block error probability of our decoder, and let $\epsilon$ be our target block error probability. For every $a$ and $b$, we find the filters $\{h_n\}$ and $\{g_n\}$ as described above, and compute the corresponding $P_e=P_e(a,b)$ via Monte Carlo simulation for a Gaussian input process with PSD as in~\eqref{eq:PDSflat}. Among all $(a,b)$ for which $P_e(a,b)<\epsilon$, we choose the pair that results in the smallest MSE distortion $\frac{1}{T}\sum_{t=1}^T\mathbb{E}(X_t-\hat{X}_t)^2$. The target block error probability for all of the setups we consider is $\epsilon=10^{-3}$, and the block length we consider is $T=2^{11}$. Roughly, these parameters correspond to allowing a per-sample overload error probability of $~10^{-3}\cdot 2^{-11}\approx 4.89\cdot 10^{-7}$.

\subsection{Evaluation Method}

The system was designed for a bandlimited Gaussian process with a flat PSD. Nevertheless, we would like it to achieve approximately the same MSE distortion and error probability for all bandlimited processes with the same variance, regardless of the PSD within that band. For an ideal modulo ADC and large $p$, this is indeed the case, as shown in Section~\ref{sec:oversampled}. To test to what extent this remains the case also for the ring oscillator based modulo ADC, we apply our system on two types of processes: 1) A Gaussian process with variance $\sigma^2$ and bandwidth $B$, whose PSD is flat within this band, for which the system was designed; 2) A sinusoidal waveform, whose frequency is chosen at random, uniformly on $[0,B)$, and whose amplitude is $\sqrt{2\sigma^2}$, such that its power is $\sigma^2$. 

For each experiment, we also plot the theoretical performance of an ideal $(R,\alpha)$ mod-ADC, as well as those of a first-order $\Sigma\Delta$ (with the optimal $1$-tap noise shaping filter) converter, both designed to achieve the same target block error probability for the bandlimited Gaussian stochastic process $X(t)$. Although overload errors have a different effect on $\Sigma\Delta$ converters and modulo ADCs, both systems fail to achieve their target distortions unless those are avoided.

In the ADC literature, it is quite common to measure the performance of a particular ADC for a sinusoidal input. One drawback of this approach is that the deterministic nature of the input signal allows to design the ADC such that overload errors \emph{never} occur, without significantly increasing its dynamic range above the standard deviation of its input. For stochastic processes, even if Gaussianity is assumed, the dynamic range must be as large as multiple standard deviations of its input, in order to ensure a small overload probability. In our derivations, this is manifested through the rate backoff parameter $\delta$, which dictates the ratio between the quantizer's dynamic range $2^R$ and the standard deviation of its input (which in our case is the prediction error processes).

\begin{figure*}[!htb]
\begin{center}
\subfloat[$B=100$Hz, $L=3$]{
\includegraphics[width=0.45\textwidth]{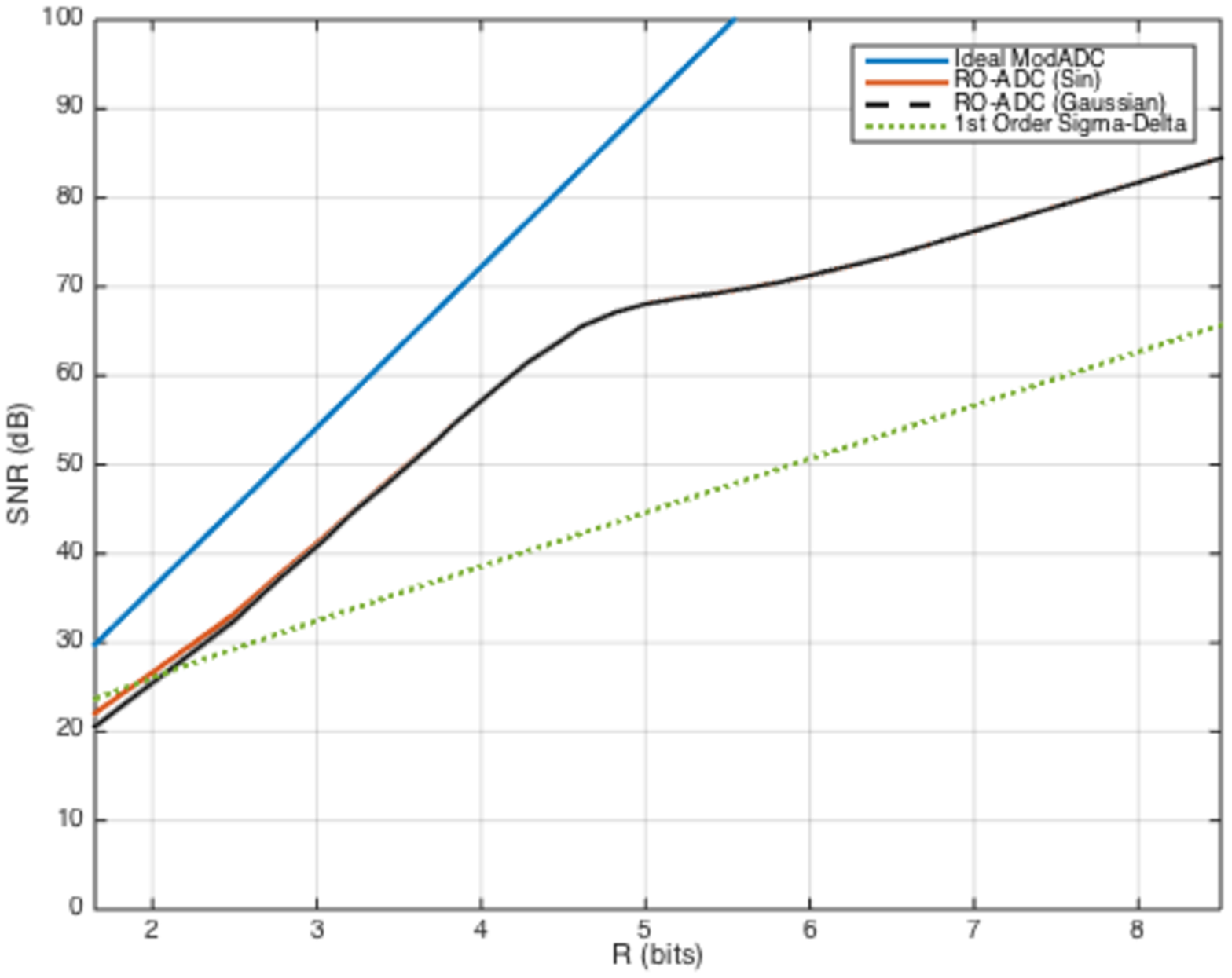}
\label{fig:100Hz}}
\qquad
\subfloat[$B=44.1$KHz, $L=3$]{
\includegraphics[width=0.45\textwidth]{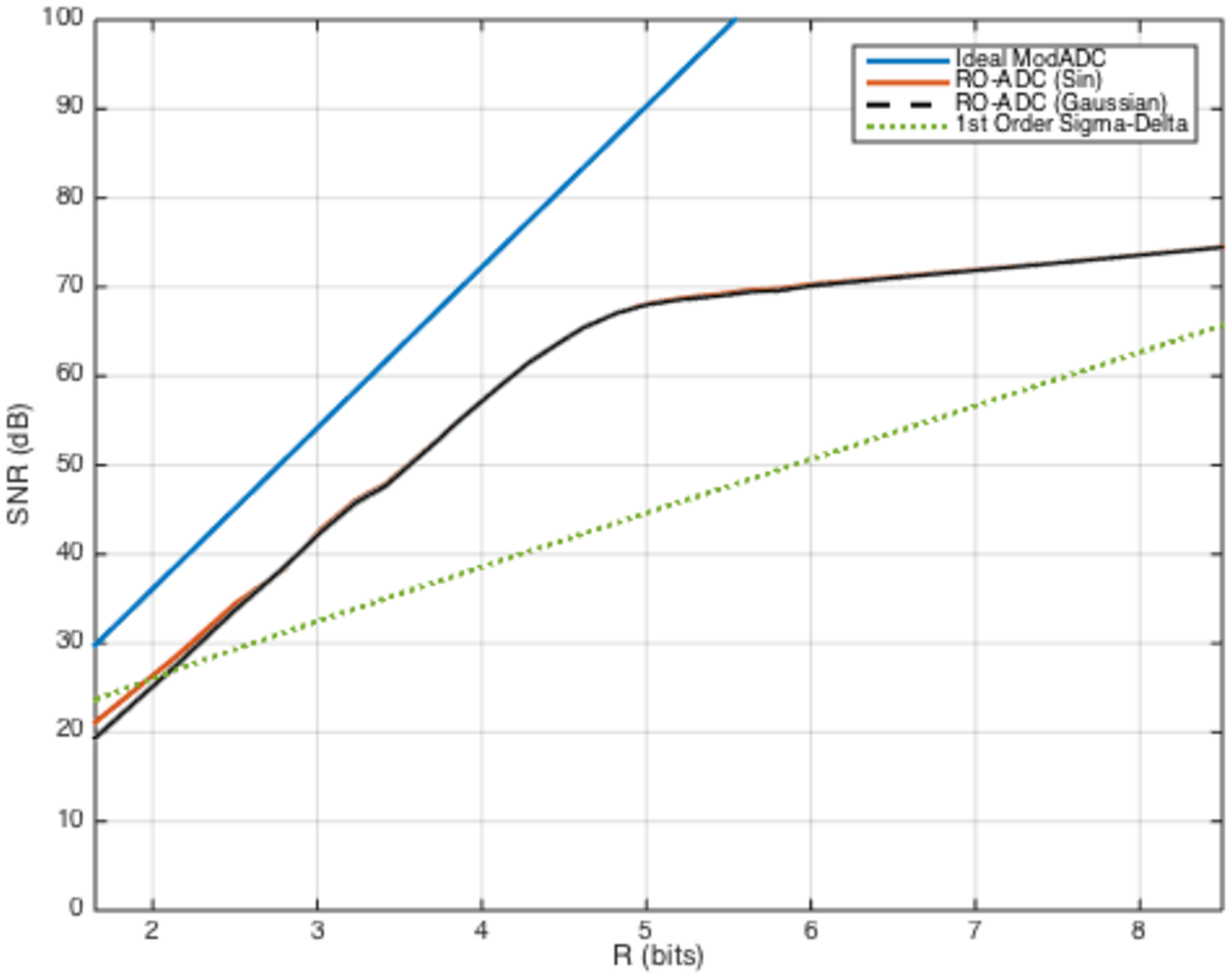}
\label{fig:44KHz}}

\subfloat[$B=100$KHz, $L=3$]{
\includegraphics[width=0.45\textwidth]{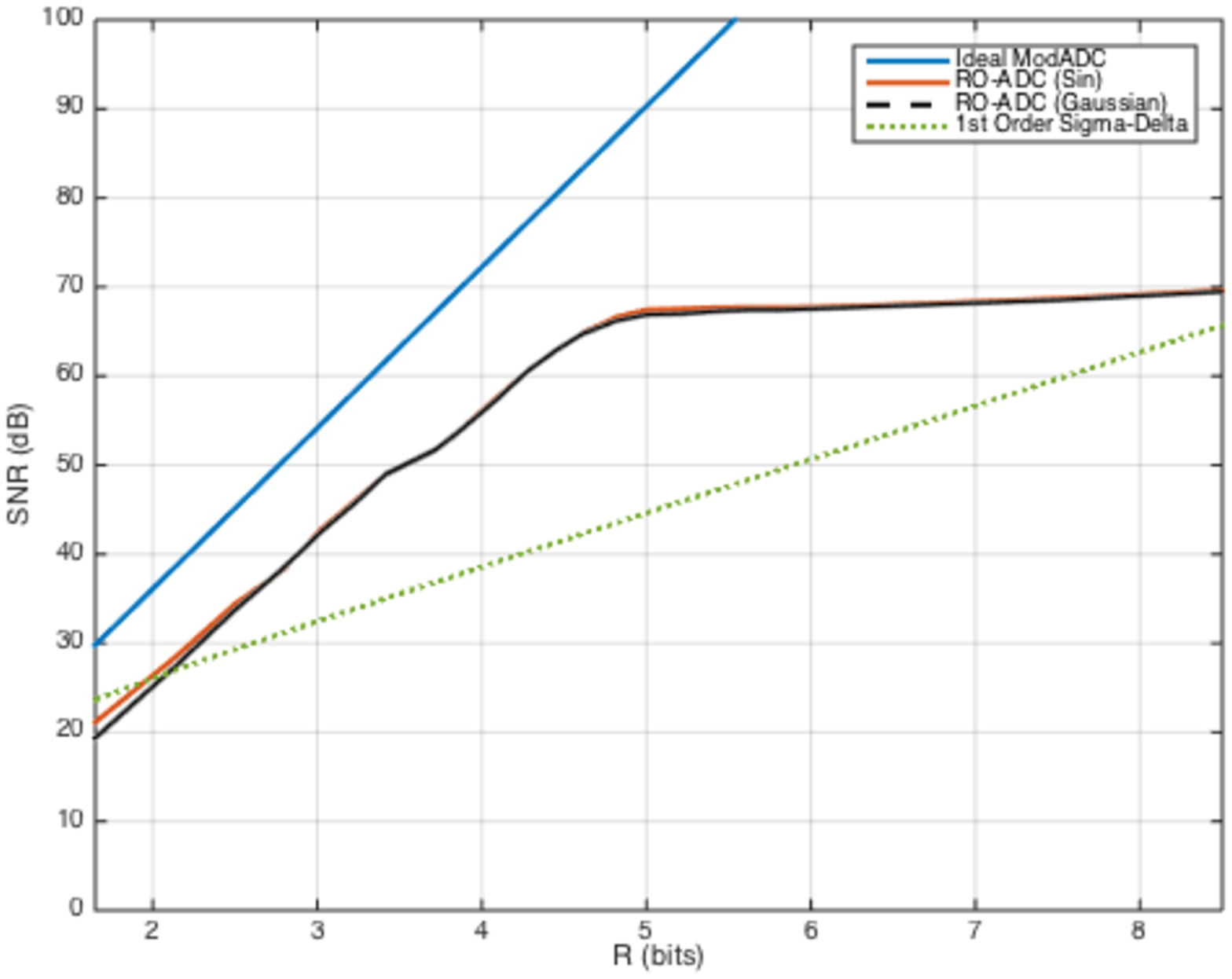}
\label{fig:100KHz}}
\qquad
\subfloat[$B=1$MHz, $L=3$]{
\includegraphics[width=0.45\textwidth]{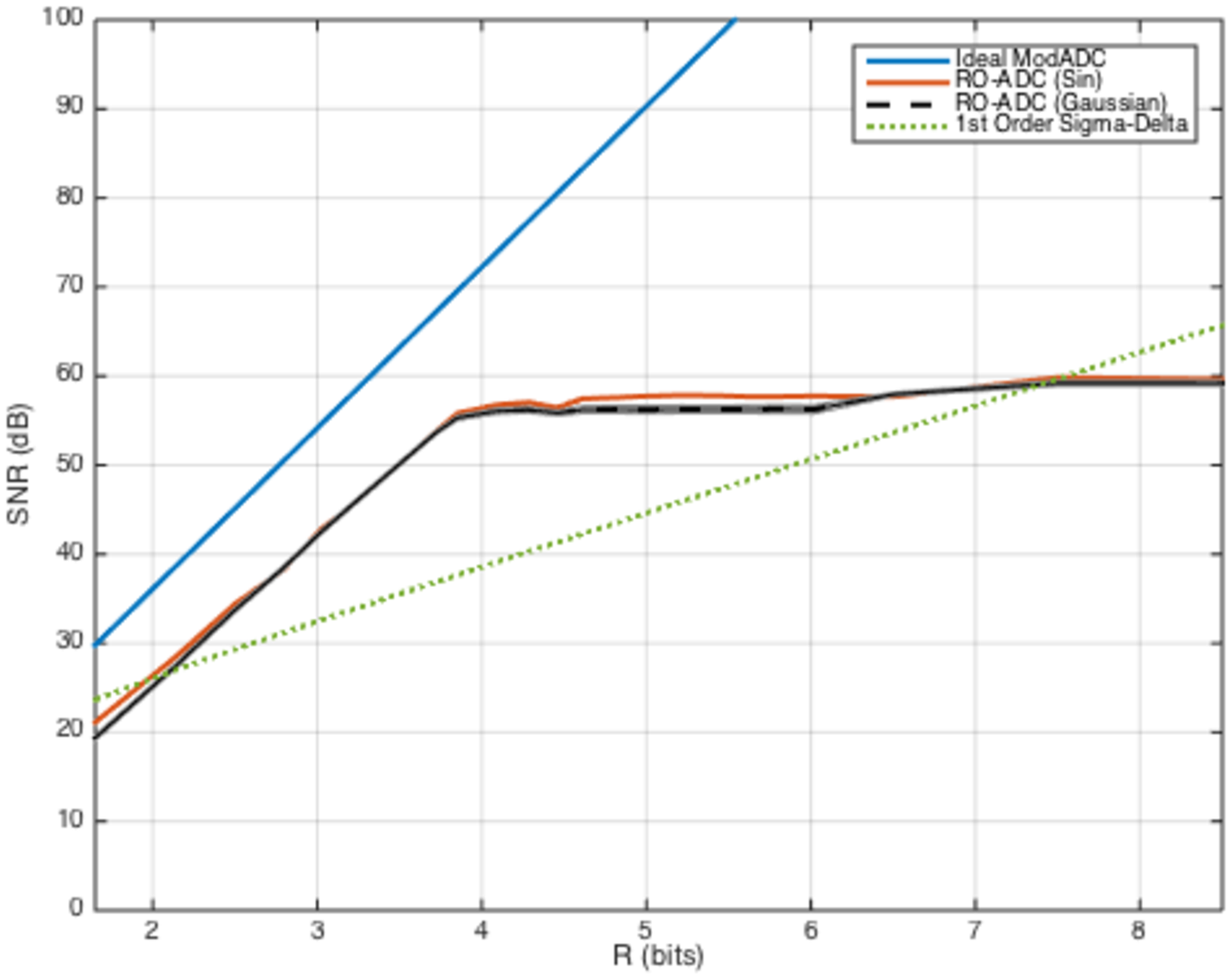}
\label{fig:1MHz}}
\end{center}
\caption{Performance of ring oscillators based modulo ADC (RO-ADC). We plot SNR vs. quantization rate for a Gaussian process and for a sinusoidal waveform processes with a random frequency, uniformly distributed over $[0,B)$. For comparison we also plot the performance of an ideal $(R,\alpha)$ mod-ADC, as well as those of an ideal first-order $\Sigma\Delta$ converter. For all curves, SNR is defined as $\sigma^2/D$. The prediction filter has $p=25$ taps, whereas the smoothing filter has $2k+1$ taps for $k=22$.}
\label{f:symICrates}
\end{figure*}

In order to allow a unified presentation of the results for both Gaussian and sinusoidal processes, rather than plotting the rate $R_{\text{mod-ADC}}(D)$ required by the modulo ADC in order to achieve an MSE distortion $D$ with target block error probability $\epsilon$, we plot $R_{\text{mod-ADC}}(D)-\delta$, where
\begin{align}
\delta=\frac{1}{2}\log\left(-\frac{2}{3}\ln\left(\frac{\epsilon}{2T}\right)\right).\label{eq:deltaNum}
\end{align}
This is consistent with traditional converter analyses that separate saturation effects from granularity ones~\cite{jayantnoll,gershogray}.
For our parameters $T=2^{11}$, $\epsilon=10^{-3}$,~\eqref{eq:deltaNum} evaluates to $\delta\approx 1.6717$ bits. Note that by~\eqref{eq:Rdelta}, $\delta$ is the rate backoff required in order to attain block error probability below $\epsilon$ by an ideal modulo ADC, when the input process is Gaussian. A similar analysis reveals that the same rate backoff is also required for a $\Sigma\Delta$ converter to attain the same block error probability, under the same assumptions on the input process~\cite{oe15c}. Thus, in all figures we also plot $R_{\Sigma\Delta}(D)-\delta$ rather than $R_{\Sigma\Delta}(D)$, where $R_{\Sigma\Delta}(D)$ is the rate needed by the $\Sigma\Delta$ converter to attain distortion $D$ with block error probability below $\epsilon$.

\subsection{Results and Discussion}

We have performed experiments for the parameters $L=3$ and four different values of $B$: $100$Hz, $44.1$KHz, $100$KHz and $1$MHz. The value of $\sigma^2$ is immaterial, as it can be absorbed in the parameter $b$. The results are depicted in Figures~\ref{fig:100Hz},~\ref{fig:44KHz},~\ref{fig:100KHz} and~\ref{fig:1MHz}, respectively. The results are based on Monte Carlo simulation, with $10^3$ independent trials for each point in each figure. No overload errors were observed for the choices of $a$, $b$, $\{h_n\}$ and $\{g_n\}$ that correspond to each point in the figures, neither for the Gaussian processes and neither for the sinusoidal processes.

In general, the results indicate that the ring oscillator implementation of a modulo ADC is closer to the ideal modulo ADC for small bandwidths $B$ and quantization rates $R$. In all figures we observe the same trend: for small enough $R$ the curve of the SNR as a function of $R$ for the ring oscillator modulo ADC is parallel to that of the ideal modulo ADC, and has a slope of $\approx 6L=18$dB/bit, in agreement with~\eqref{eq:RDoversampled}. Then, for large enough $R$ the system's non-linearities ``kick-in'' and the slope significantly decreases. Eventually, for large enough $R$, the first-order $\Sigma\Delta$ converter outperforms the ring oscillator modulo ADC, as can be observed in Figure~\ref{fig:1MHz}. Nevertheless, for moderate values of $R$, even for $B=1$MHz, the improvement over the $\Sigma\Delta$ converter can be as large as $17$dB.

The trends above are to be expected. Recall that the output of the corresponding modulo ADC is given by~\eqref{eq:modADCringModel}. If $b\cdot\sigma$ is small enough, the function $f(a+bX_n)$ resides in a small interval around $f(a)$ with high probability, and is well approximated by the linear function $f(a)+bf'(a)X_n$. Consequently, the output of the modulo ADC can be well approximated as
\begin{align}
Y_n\approx[T_s b f'(a) X_n+Z_n-Z_{n-1}+T_s f(a)]\bmod 2N.\nonumber
\end{align}
Since $T_s f(a)$ is known and can be removed, this is equivalent to a $(T_s b f'(a),\log(2N))$ mod-ADC, albeit with quantization noise $ Z_n-Z_{n-1}$ rather than $Z_n$.

Typically, however, in order to get a large gain from using a modulo ADC rather than a standard uniform quantizer, we would like to use an $(R,\alpha)$ mod-ADC with $\alpha\cdot\sigma\gg \tfrac{1}{2}2^R$. Thus, in order to get a ``useful'' modulo ADC that is close to ideal, the two conditions (i) $b\cdot\sigma\ll 1$; (ii) $T_s f'(a) \cdot b\cdot\sigma\gg N$; should hold. These two conditions can only be satisfied simultaneously if $T_s f'(a)\gg 1$, i.e., when the sampling rate is low, relative to $f'(a)$.

For an ideal $(R,\alpha)$ mod-ADC with a given target overload error probability, as $R$ increases $\alpha$ can also increase, resulting in a smaller distortion. Similarly, for the ring oscillator modulo ADC, the optimal choice of $b$ should, in general, increase with $R$. For small rates, the optimal value of $b$ is also small, such that the linear approximation for the function $f(\cdot)$ is not too bad. However, as $R$, and consequently $b$, increases, the nonlinearities start becoming significant and the slope of the SNR as a function of $R$ becomes smaller.

\section{Modulo ADCs for Jointly Stationary Processes}
\label{subsec:spacetimecor}

In this section we develop a scheme that uses $K$ parallel modulo ADCs for digitizing $K$ jointly stationary processes, provide a corresponding low-complexity decoding algorithm, and characterize its performance.

Let $\{X^1_n\},\ldots,\{X^K_n\}$ be $K$ discrete-time jointly Gaussian stationary random processes, obtained by sampling the jointly Gaussian stationary processes $X_1(t),\ldots,X_K(t)$ every $T_s$ seconds. Let
\begin{align}
Y^k_n=[\alpha X^k_n+Z^k_n]\bmod 2^R, \ k=1,\ldots,K, \ n=1,2,\ldots\nonumber
\end{align}
be the processes obtained by applying $K$ parallel $(R,\alpha)$ mod-ADCs, on $\{X^1_n\},\ldots,\{X^K_n\}$, where the input to the $k$th modulo ADC is the process $\{X^k_n\}$, and $\{Z^k_n\}$ is a $\Unif((-1,0])$ noise, iid in space and in time. Let
\begin{align}
V^k_n=\alpha X^k_n+Z^k_n, \ k=1,\ldots,K, \ n=1,2,\ldots\nonumber
\end{align}
be the non-folded version of $Y_k^n$. Let $\bX_n=[X^1_n,\ldots,X^K_n]^T$, and define $\bY_n$, $\bZ_n$ and $\bV_n$ similarly. Our goal is to recover the process $\{\bV_n\}$ from the outputs of the modulo ADCs with high probability.

To achieve this goal, we employ a two-step procedure, combining the schemes from Section~\ref{subsec:timecor} and Section~\ref{subsec:spacecor}: first we compute a predictor $\hat{\bV}^p_n$ based on previous $p$ samples $\{\bV_{n-1},\ldots,\bV_{n-p}\}$ whose error is the vector $\bE^p_n=\bV_n-\hat{\bV}^p_n$. By the same derivation as in Section~\ref{subsec:timecor}, we can produce $\left[\bE^p_n\right]\bmod 2^R$ from $\bY_n$ and $\{\bV_{n-1},\ldots,\bV_{n-p}\}$, where the modulo operation applied to a vector is to be understood as reducing each coordinate modulo $2^R$. Now, our task is to decode a modulo-folded correlated random vector, which can be done via the integer-forcing decoder described in Section~\ref{subsec:spacecor}. This relatively simple decoding procedure allows to efficiently exploit both temporal and spatial correlations. Below we describe it in more detail. See Figure~\ref{fig:spatiotemporal}. For all $\ell,m\in \{1,\ldots,K\}$, let $C_{\ell m}[r]=\mathbb{E}(X^{\ell}_n X^m_{n-r})$.

\textbf{\textit{Inputs}}: $\bY_n$, $\{\bV_{n-1},\ldots,\bV_{n-p}\}$, $\{C_{\ell m}[r]\}$ for all $\ell,m\in \{1,\ldots, K\}$, $R$, $\alpha$.

\textbf{\textit{Outputs}}: Estimates $\hat{\bV}_n$ and $\hat{\bX}_n$ for $\bV_n$ and $\bX_n$, respectively.

\textbf{\textit{Algorithm}}:
\begin{enumerate}
	\item Compute the optimal linear MMSE predictor for $\bV_n$ from its last $p$ samples 
	\begin{align}
	\hat{\bV}^p_n=\sum_{i=1}^p \bH_i \cdot \left(\bV_{n-i}+\bf{\frac{1}{2}}\right)-\bf{\frac{1}{2}},\nonumber
	\end{align}
	where $\{\bH_n\}$ is a $p$-tap matrix prediction filter, $\bH_i\in\RR^{K\times K}$, for $i=1,\ldots,p$, computed based on $\{C_{\ell m}[r]\}$ for all $\ell,m\in \{1,\ldots, K\}$ and $\alpha$, and the shift by $\bf{\tfrac{1}{2}}$ compensates for $\mathbb{E}(\bZ_n)$.
	\item Compute 
	\begin{align}
	\bW_n&=[\bY_n-\hat{\bV}^p_n]\bmod 2^R,\nonumber
	\end{align}
	where the modulo reduction is to be understood as taken component-wise. 
	\item Define the $p$th order prediction error $\bE_n^p\triangleq \bV_n-\hat{\bV}_n^p$, and compute its covariance matrix $\bm{\Sigma}_{p}=\mathbb{E}\left[\bE_n^p (\bE_n^p)^T\right]$ based on $\{C_{\ell m}[r]\}$ for all $\ell,m\in \{1,\ldots, K\}$ and $\alpha$.
	Note that $\bm{\Sigma}_{p}$ is indeed invariant with respect to $n$ due to stationarity.
	\item Solve
	\begin{align}
	\bA&=[\ba_1|\cdots|\ba_K]^T\nonumber\\
	&=\argmin_{\substack{{\bar{\bA}\in\ZZ^{K\times K}}\\{|\bar{\bA}|\neq 0}}}\max_{k=1,\ldots,K} \frac{1}{2}\log\left(12\bar{\ba}_k^T\bm{\Sigma}_p\bar{\ba}_k\right).\label{eq:IFSCST}
	\end{align}\label{st:IntegerStep}
	\item For $k=1,\ldots,K$, compute
	\begin{align}
	\bar{g}^k_{n}&\triangleq \left[\ba_k^T\bW_n\right]\bmod 2^R\nonumber\\
	\tilde{g}^k_{n}&\triangleq\left[\bar{g}^k_{n}+\frac{1}{2}2^R\right]\bmod 2^R-\frac{1}{2}2^R,\nonumber
	\end{align}
	and set $\tilde{\bg}_n=[\tilde{g}^1_{n},\ldots,\tilde{g}^k_{n}]^T$.
	\item Compute
	\begin{align}
	\hat{\bE}^p_{n}=\bA^{-1} \tilde{\bg}_n, \ \ \hat{\bV}_{n}=\hat{\bV}_n^p+\hat{\bE}^p_{n}, \ \ \hat{\bX}_n=\frac{\hat{\bV}_n+\bf{\frac{1}{2}}}{\alpha}.\nonumber
	\end{align}
\end{enumerate}

\begin{proposition}
	Let $\bA=[\ba_1|\cdots|\ba_K]^T$ be the matrix found in step~\ref{st:IntegerStep} of the algorithm above, and define \begin{align}
	R^{\text{ST}}_{\text{IFSC}}(\bA)=\max_{k=1,\ldots,K}\frac{1}{2}\log\left(12\ba_k^T \bm{\Sigma}_p\ba_K^T\right). 
	\end{align}
	We have that
	\begin{align}
	\Pr(\mathcal{E}_{\mathrm{OL}_n})=\Pr(\hat{\bV}_n\neq \bV_n)\leq 2K\exp\left\{-\frac{3}{2}\cdot 2^{2\left(R-R^{\text{ST}}_{\text{IFSC}}(\bA)\right)} \right\},\nonumber
	\end{align}
	and
	\begin{align}
	D^k_{n}&=\mathbb{E}\left[\left(X^k_{n}-\hat{X}^k_{n}\right)^2\bigg|\OLn\right]\leq \frac{1}{12\alpha^2 (1-\Pr(\mathcal{E}_{\mathrm{OL}_n}))},\nonumber
	\end{align}
	for all $k=1,\ldots,K$, where the event $\OLn=\{\hat{\bV}_n= \bV_n\}$ is the complement of the event $\mathcal{E}_{\mathrm{OL}_n}=\{\hat{\bV}_n\neq \bV_n\}$.
	\label{prop:IFST}
\end{proposition}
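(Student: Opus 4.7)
The plan is to combine the two ingredients already used in Propositions~\ref{prop:time} and~\ref{prop:IFSC}: the ``modulo-of-prediction-error'' identity for exploiting temporal structure, and integer-forcing for exploiting the spatial structure of the resulting error vector.

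First I would establish the vector analogue of~\eqref{eq:WnMagic}. Applying~\eqref{eq:moddist} coordinate-wise gives
\begin{align}
\bW_n = [\bY_n - \hat{\bV}_n^p] \bmod 2^R = [\bV_n - \hat{\bV}_n^p] \bmod 2^R = [\bE_n^p] \bmod 2^R, \nonumber
\end{align}
since $\hat{\bV}_n^p$ is a (deterministic, given past samples) function of $\{\bV_{n-1},\ldots,\bV_{n-p}\}$. Next, using the scalar identity~\eqref{eq:iden} with $\bh = \bW_n$ and integer vector $\ba_k$,
\begin{align}
\bar{g}^k_n = [\ba_k^T \bW_n] \bmod 2^R = [\ba_k^T \bE_n^p] \bmod 2^R. \nonumber
\end{align}
Then $\tilde{g}^k_n \in [-\tfrac{1}{2} 2^R, \tfrac{1}{2} 2^R)$ equals $\ba_k^T \bE_n^p$ if and only if $|\ba_k^T \bE_n^p| < \tfrac{1}{2} 2^R$, exactly as in the scalar analysis.

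For the error probability, I would observe that $\bE_n^p = \bV_n - \hat{\bV}_n^p$ is a fixed linear combination of the Gaussian variables $\{\alpha X^\ell_m\}$ (for $m \leq n$, $\ell \in \{1,\ldots,K\}$) and the uniform variables $\{Z^\ell_m\}$. Hence so is $\ba_k^T \bE_n^p$, which has zero mean and variance $\ba_k^T \bm{\Sigma}_p \ba_k$, so Lemma~\ref{lem:mixture} applies and yields
\begin{align}
\Pr\left(|\ba_k^T \bE_n^p| \geq \tfrac{1}{2} 2^R \right) \leq 2 \exp\left\{-\tfrac{3}{2} \cdot 2^{2(R - \tfrac{1}{2}\log(12 \ba_k^T \bm{\Sigma}_p \ba_k))}\right\}. \nonumber
\end{align}
The overall overload event is $\mathcal{E}_{\mathrm{OL}_n} = \bigcup_{k=1}^K \{|\ba_k^T \bE_n^p| \geq \tfrac{1}{2} 2^R\}$, since $\hat{\bV}_n = \bV_n$ exactly when every $\tilde{g}^k_n$ equals $\ba_k^T \bE_n^p$ (invertibility of $\bA$ over $\ZZ$ makes this an ``if and only if''). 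A union bound, together with the definition of $R^{\text{ST}}_{\text{IFSC}}(\bA)$, then gives the claimed bound on $\Pr(\mathcal{E}_{\mathrm{OL}_n})$.

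For the distortion, conditioned on $\OLn$ we have $\hat{\bV}_n = \bV_n$, hence $\hat{X}_n^k - X_n^k = (Z_n^k + \tfrac{1}{2})/\alpha$, and I would finish with the same conditional-expectation manipulation as in~\eqref{eq:Dtime}, bounding $\mathbb{E}[(Z_n^k + 1/2)^2 \mid \OLn]$ by $\frac{1}{12(1 - \Pr(\mathcal{E}_{\mathrm{OL}_n}))}$. The one point requiring slight care, which I view as the main subtlety rather than an obstacle, is the justification of using Lemma~\ref{lem:mixture}: the covariance structure of $\bE_n^p$ is governed by the joint Gaussian process, but because the prediction residual is an affine function of $\{\alpha \bX_m\}_{m \leq n}$ and $\{\bZ_m\}_{m \leq n}$ with deterministic coefficients (the MMSE predictor taps), any real linear functional of $\bE_n^p$ is of exactly the form treated by Lemma~\ref{lem:mixture}, with the uniform component coming from the $\bZ_m$'s and the Gaussian component from the $\bX_m$'s. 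Everything else is essentially bookkeeping by copying the arguments from Propositions~\ref{prop:time} and~\ref{prop:IFSC}.
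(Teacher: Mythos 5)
Your proposal is correct and follows essentially the same route as the paper's own proof: the vector version of~\eqref{eq:WnMagic} via the modulo distributive law, the identity~\eqref{eq:iden} to reduce to $[\ba_k^T\bE_n^p]\bmod 2^R$, Lemma~\ref{lem:mixture} plus a union bound for the overload probability, and the conditional-expectation manipulation of~\eqref{eq:Dtime} for the distortion. The only nitpick is that the ``if and only if'' needs only $\det(\bA)\neq 0$ (injectivity of $\bA^{-1}$ over $\RR$), not invertibility over $\ZZ$, but this does not affect the argument.
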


\begin{proof}
 We first note that
\begin{align}
\bW_n&=[\bY_n-\hat{\bV}^p_n]\bmod 2^R\nonumber\\
&=\left[[\bV_n]\bmod 2^R-\hat{\bV}_n^p \right]\bmod 2^R\nonumber\\
&=\left[\bV_n-\hat{\bV}_n^p \right]\bmod 2^R\nonumber\\
&=\left[\bE_n^p \right]\bmod 2^R,\nonumber
\end{align}
where the second equality follows from the modulo distributive law~\eqref{eq:moddist}. By~\eqref{eq:iden}, we have that 
\begin{align}
\bar{g}^k_{n}&\triangleq \left[\ba_k^T\bW_n\right]\bmod 2^R=\left[\ba_k^T\bE^p_n\right]\bmod 2^R=[g_n^k]\bmod 2^R,\nonumber
\end{align} 
where
\begin{align}
g^k_n=\ba_K^T\bE_n^p.
\end{align}
Furthermore, $\tilde{g}^k_n\in[-\tfrac{1}{2}2^R,\tfrac{1}{2}2^R)$ is merely a cyclicly shifted version of $\bar{g}^k_n\in[0,2^R)$. Thus, $\tilde{g}^k_n=g^k_n$ if and only if $g^k_n\in[-\tfrac{1}{2}2^R,\tfrac{1}{2}2^R)$. Consequently, $\hat{\bE}^p_{n}\neq\bE_n$, and therefore $\hat{\bV}_n\neq \bV_n$, if and only if the event
\begin{align}
\mathcal{E}_{\mathrm{OL}_n}=\bigcup_{k=1}^K\left\{|g^k_{n}|\geq \frac{1}{2} 2^R \right\},\nonumber
\end{align}
occurs. Now, repeating the same steps from the proof of Proposition~\ref{prop:IFSC}, we arrive at the claimed bounds.
\end{proof}

Using Shannon's lower bound, and applying similar arguments as in~\cite{zb99}, one can show that any quantization scheme for the source $\{\bX_n\}$ that produces $R$ bits/sample/coordinate and attains $\mathbb{E}(X^k_{n}-\hat{X}^k_{n})^2\leq D$, $k=1,\ldots,K$, $n=1,\ldots$, must have $R\geq \frac{1}{K}h(\bX_n|\bX_{n-1},\ldots)-\frac{1}{2}\log(2\pi e D)$. Let $\bE_{n}^{p*}=\bX_n-\hat{\bX}_{n}^p$, where $\bX_{n}^p$ is the optimal $p$th order MMSE (linear) predictor of $\bX_n$ from $\{\bX_{n-1},\ldots,\bX_{n-p}\}$, and let $\bm{\Sigma}^*_{p}=\mathbb{E}\left[\bE_{n}^{p*}(\bE_{n}^{p*})^T\right]$. We have that
\begin{align}
h(\bX_n&|\bX_{n-1},\ldots,\bX_{n-p})=h(\bE_n^{p*}|\bX_{n-1},\ldots,\bX_{n-p})\nonumber\\
&\stackrel{(a)}{=}h(\bE_n^{p*})\stackrel{(b)}{=}\frac{1}{2}\log\left((2\pi e)^K |\bm{\Sigma}_{p}^*|\right),\nonumber
\end{align}
where $(a)$ follows from the orthogonality principle of MMSE estimation~\cite{gershogray}, and $(b)$ from the fact that $\bE_n^{p*}$ is a Gaussian random vector~\cite{coverthomas}. Thus, for any quantization scheme we must have
\begin{align}
R(D)\geq R_{\text{SLB}}(D)\triangleq \frac{1}{2}\log\left(\frac{\lim_{p\to\infty}\left|\bm{\Sigma}_{p}^*\right|^{\frac{1}{K}}}{D}\right).\nonumber
\end{align}

Similarly to previous subsections, we set $D=1/12\alpha^2$, which is a good approximation for $D^n_k$, $k=1,\ldots,K$, provided that $\delta=R-R^{\text{ST}}_{\text{IFSC}}(A)$ is not too small. The rate required by our scheme, as given in Proposition~\ref{prop:IFST}, depends on $12\bm{\Sigma}_p$, which corresponds to the prediction error covariance of the process $\tilde{\bX}_n=\sqrt{12\alpha^2}\bX_n+\tilde{\bZ}_n=\tfrac{1}{\sqrt{D}}(\bX_n+\sqrt{D}\tilde{\bZ}_n)$, where $\tilde{\bZ}_n=\sqrt{12}\bZ_n$ is a random vector with unit variance iid entries. Let $\tilde{\bm{\Sigma}}_p$ be the $p$th order prediction error covariance of the process $\bX_n+\sqrt{D}\tilde{\bZ}_n$. We can rewrite the rate required by our scheme as
\begin{align}
R^{\text{ST}}_{\text{IFSC}}(\bA,D)\triangleq \frac{1}{2}\log\left(\frac{\max_{k=1,\ldots,K}\ba_k^T\tilde{\bm{\Sigma}}_p\ba_k}{D}\right).\nonumber
\end{align}
Now, noting that if $h(\bX_n|\bX_{n-1},\ldots)>-\infty$, we have that  $\tilde{\bm{\Sigma}}_p\to\bm{\Sigma}_p^*$ as $D\to 0$, we obtain the following proposition.

\begin{proposition}
Assume $h(\bX_n|\bX_{n-1},\ldots)\geq -\infty$, and let $\bm{\Sigma}^*\triangleq\lim_{p\to\infty} \bm{\Sigma}_p^*$. We have that
\begin{align}
&\lim_{D\to 0}\lim_{p\to\infty} R^{\text{ST}}_{\text{IFSC}}(\bA,D)- R_{\text{SLB}}(D)\nonumber\\
&=\frac{1}{2}\log\left(\frac{\max_{k=1,\ldots,K}\ba_k^T\bm{\Sigma}^*\ba_k}{\left|\bm{\Sigma}^*\right|^{\frac{1}{K}}}\right).
\label{eq:gapST}
\end{align}
\end{proposition}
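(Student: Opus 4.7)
The plan is to directly subtract the two expressions appearing immediately before the proposition. Substituting them gives
\begin{align}
R^{\text{ST}}_{\text{IFSC}}(\bA,D) - R_{\text{SLB}}(D)=\frac{1}{2}\log\!\left(\frac{\max_{k}\ba_k^T\tilde{\bm{\Sigma}}_p\,\ba_k}{\lim_{q\to\infty}\left|\bm{\Sigma}_q^*\right|^{1/K}}\right),\nonumber
\end{align}
so the task is to evaluate the iterated limit $\lim_{D\to 0}\lim_{p\to\infty}$ of the numerator and match the denominator to $|\bm{\Sigma}^*|^{1/K}$.

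First, with $D>0$ fixed, I would push $p\to\infty$. The process $\bX_n+\sqrt{D}\tilde{\bZ}_n$ has full-rank PSD matrix bounded below by $D\bI$, so its log-determinant is integrable. The matrix-valued Szeg\H{o}/Paley--Wiener theorem (the multivariate analogue of~\eqref{eq:PredErrorLim} used in the proof of Proposition~\ref{prob:SLB}) then guarantees that the $p$th-order prediction-error covariance $\tilde{\bm{\Sigma}}_p$ converges, in any matrix norm, to a positive-definite limit $\tilde{\bm{\Sigma}}$ corresponding to prediction from the infinite past. Applied to $\{\bX_n\}$ itself, the same result gives $\bm{\Sigma}_p^*\to\bm{\Sigma}^*$, so the denominator becomes $|\bm{\Sigma}^*|^{1/K}$.

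Next, I would send $D\to 0$. The assumption $h(\bX_n|\bX_{n-1},\ldots)>-\infty$ translates, in the vector case, into $\int_{-\pi}^{\pi}\log|S_\bX(e^{j\omega})|\,d\omega>-\infty$, so dominated convergence applied to the spectral formula for $\tilde{\bm{\Sigma}}$ yields $\tilde{\bm{\Sigma}}\to \bm{\Sigma}^*$ as $D\to 0$. Since the integer matrix $\bA$ in~\eqref{eq:IFSCST} is chosen as a function of $D$, I would also argue that $\bA=\bA(D)$ stabilizes: once $\tilde{\bm{\Sigma}}$ lies in a sufficiently small neighborhood of $\bm{\Sigma}^*$, continuity of $\ba\mapsto\ba^T\tilde{\bm{\Sigma}}\ba$ together with the integer-valued feasible set forces the optimizer to coincide with one of the minimizers for $\bm{\Sigma}^*$. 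Substituting these limits into the displayed difference produces exactly~\eqref{eq:gapST}.

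The main delicate point is handling the integer-valued $\argmin$ under this perturbation. The feasible set $\{\bA\in\ZZ^{K\times K}:|\bA|\neq 0\}$ is countable and unbounded, so I must rule out escape to infinity. This follows because $\bm{\Sigma}^*\succ 0$ bounds $\ba^T\bm{\Sigma}^*\ba\geq \lambda_{\min}(\bm{\Sigma}^*)\|\ba\|^2$, so every row of a near-optimal $\bA$ lies in a Euclidean ball whose radius depends only on the minimax value achieved at $\bm{\Sigma}^*$. For all sufficiently small $D$ the perturbation $\tilde{\bm{\Sigma}}-\bm{\Sigma}^*$ is small enough that the same ball contains the optimal rows for $\tilde{\bm{\Sigma}}$ as well; since only finitely many integer points lie inside a ball, a standard compactness argument over this finite set justifies interchanging the limit with the $\max_k$ and with the $\argmin$, completing the proof.
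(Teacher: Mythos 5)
Your proposal is correct and follows essentially the same route as the paper, which states this proposition without a separate proof, treating it as an immediate consequence of subtracting the two displayed rate expressions $R^{\text{ST}}_{\text{IFSC}}(\bA,D)$ and $R_{\text{SLB}}(D)$ and invoking the convergence $\tilde{\bm{\Sigma}}_p\to\bm{\Sigma}_p^*$. The extra care you take with the matrix Szeg\H{o} limit and, especially, with the stabilization of the integer-valued $\argmin$ under the perturbation $\tilde{\bm{\Sigma}}\to\bm{\Sigma}^*$ (ruling out escape to infinity via $\lambda_{\min}(\bm{\Sigma}^*)>0$) addresses a point the paper passes over silently, and is a worthwhile addition rather than a deviation.
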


Thus, in the high-resolution regime, when taking large enough $p$, the gap between $R^{\text{ST}}_{\text{IFSC}}(\bA,D)$ and the information theoretic lower bound is dictated by the loss of IFSC for a source whose covariance vector is $\bm{\Sigma}^*$. The right hand side of~\eqref{eq:gapST} is non-negative~\cite{oe13b}, but is typically quite small. To illustrate this, we generate two correlated processes $\{X^1_{n}\}$ and $\{X^2_{n}\}$ as follows: let $\{W^1_n\},\{W^2_n\},\{W^3_n\}$ be three iid $\mathcal{N}(0,1)$ random processes. Let $X_n^1=\sum_{i=0}^{L-1} h_{i}W^3_{n-i}+W_n^1$, and $X_n^1=\sum_{i=0}^{L-1} g_{i}W^3_{n-i}+W_n^2$, where $\{h_n\}$ and $\{g_n\}$ are two filters, each with $L$ taps. Clearly, when the filters have sufficiently strong taps the process $\{\bX_n\}=[\{X_n^1\},\{X_n^2\}]^T$ will be highly correlated in time and in space. In Figure~\ref{fig:st} we plot the average rate required by the developed scheme, as well as $R_{\text{SLB}}(D)$, and the rate required by a standard ADC, denoted $R_{\text{naive}}(D)$, with respect to to an iid $\mathcal{N}(0,100)$ distribution on the $2L$ taps of $\{h_n\}$ and $\{g_n\}$. In the simulations performed, we took $L=5$ and $p=24$.

\begin{figure}[t]
\includegraphics[width=0.45\textwidth]{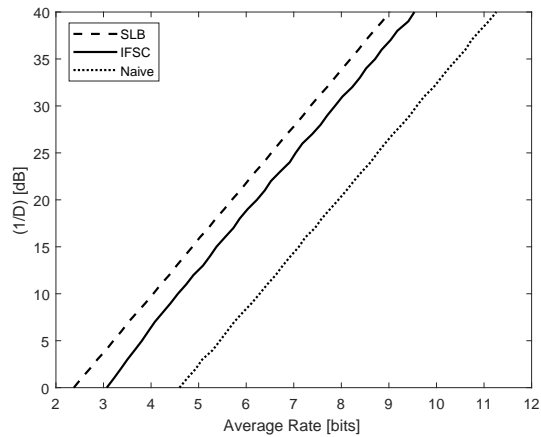}
\caption{Comparison between the average quantization rates $R^{\text{ST}}_{\text{IFSC}}(D)$, $R_{\text{SLB}}(D)$, and  $R_{\text{naive}}(D)$. The setup is that of quantizing vector of stationary processes $\{X_n^1\},\{X_n^2\}$ described in the end of Section~\ref{subsec:spacetimecor}, with $L=5$ and $p=24$.}
\label{fig:st}
\end{figure}

\section{Conclusions and Outlook}
\label{sec:conc}

We have studied the modulo ADC architecture as an alternative approach for analog-to-digital conversion. The modulo ADC allows exploitation of the statistical structure of the input process digitally at the decoder without requiring the ADC to adapt itself to the input statistics. We have demonstrated the effectiveness of oversampled modulo ADCs as a simple substitute to $\Sigma\Delta$ converters, allowing an increase in the filter's order far beyond that which is possible in current $\Sigma\Delta$ converters, since for modulo ADC filtering is done digitally. Moreover, we have shown that, when used for digitizing jointly stationary processes, parallel modulo ADCs can efficiently exploit both temporal and spatial correlations.

An implementation of modulo ADCs via ring oscillators was developed, and the corresponding input-output function for the obtained modulo ADC was characterized in terms of the delay--$V_{dd}$ profile of the inverters that construct the ring oscillator. We have then numerically studied the performance this implementation can attain for oversampled input processes, and compared it to those of $\Sigma\Delta$ converters.

There are several important challenges for future research. Perhaps most important is building a modulo ADC chip prototype. Although our simulations are based on the function $f(\cdot)$ measured from an actual (PSpice model of a) ring oscillator device, a hardware implementation is needed to fully assess the benefits of modulo ADCs. Furthermore, we would like to see whether it is possible to construct inverters with more favorable properties for ring oscillator-based modulo ADCs. In particular, we would like them to have a larger range where they are well approximated by an affine function. Another interesting avenue for future research is finding functions $g(\cdot)$ that can be implemented in the analog domain, such that the composition of function $f\circ g=f(g(\cdot))$ is more linear.

\section*{Acknowlegement}
The authors are deeply grateful to Uri Erez, whose
humbleness is the only reason for his absence from the authors list.

\bibliographystyle{IEEEtran}
\bibliography{modADCbib}

\end{document}